\DeclareMathOperator{\indeg}{indeg}
\DeclareMathOperator{\outdeg}{outdeg}
\tikzstyle{tre}=[circle,draw,inner sep = 0pt, minimum size=1.5mm]
\tikzstyle{pe}=[circle,draw,inner sep=0pt,minimum size=1.5mm]
\tikzstyle{mi}=[circle,draw,inner sep=0pt,minimum size=1.5mm, fill = black!60]
\tikzstyle{mi2}=[circle,color = lightgray,draw,inner sep=0pt,minimum size=1.5mm, fill = lightgray]
\tikzstyle{mini}=[circle,draw,inner sep=0pt,minimum size=0.8mm, fill = black!60]
\tikzstyle{minil}=[circle,draw,color = lightgray,inner sep=0pt,minimum size=0.8mm, fill = lightgray]
\tikzstyle{minir}=[circle,color = red,draw,inner sep=0pt,minimum size=0.8mm, fill = red!60]
\tikzstyle{minib}=[circle,color = blue,draw,inner sep=0pt,minimum size=0.8mm, fill = blue!60]
\tikzstyle{mimarc}=[circle,draw,inner sep=0pt,minimum size=1.5mm, fill = red!60]
\tikzstyle{bxmarc}=[rectangle,draw,inner sep=0pt,line width=0.2pt,minimum size=1.5mm, fill = red!60]
\tikzstyle{bb}=[circle,draw,inner sep=0pt,line width=1.5pt,minimum size=4.5mm]
\tikzstyle{rbx}=[circle,fill,draw,inner sep=0pt,minimum size=1.5mm]
\tikzstyle{bx}=[rectangle,draw,inner sep=0pt,line width=0.2pt,minimum size=1.5mm, fill = black!60]
\tikzstyle{bxb}=[rectangle,draw,inner sep=0pt,line width=0.2pt,minimum size=1.5mm, fill = blue!70]
\tikzstyle{minibx}=[rectangle,draw,inner sep=0pt,line width=0.2pt,minimum size=0.8mm, fill = black!60]
\tikzstyle{minilbx}=[rectangle,color = lightgray,draw,inner sep=0pt,line width=0.2pt,minimum size=0.8mm, fill = lightgray]
\tikzstyle{minirbx}=[rectangle,color = red,draw,inner sep=0pt,line width=0.2pt,minimum size=0.8mm, fill = red!60]
\tikzstyle{minibbx}=[rectangle,color = blue,draw,inner sep=0pt,line width=0.2pt,minimum size=0.8mm, fill = blue!60]
\tikzstyle{label}=[circle,inner sep=0pt,minimum size=0.2mm, fill = white]
\tikzstyle{ppalpath}=[style={->},
\tikzset{
    between/.style args={#1 and #2}{
         at = ($(#1)!0.5!(#2)$)
    }
}
\tikzset{
    amunt/.style args={#1 and #2}{
         at = ($(#1)!0.25!(#2)$)
    }
}
\newtheorem{thm}{Theorem}
\newtheorem{prop}[thm]{Proposition}
\newtheorem{lem}[thm]{Lemma}
\newtheorem{cor}[thm]{Corollary}
\newtheorem{conj}[thm]{Conjecture}
\newtheorem{defn}[thm]{Definition}
\newenvironment{pf}{\begin{proof}}{\end{proof}}
\DeclareMathOperator{\head}{head}
\DeclareMathOperator{\tail}{tail}
\title{Fence Decompositions and Cherry Covers in Non-Binary Phylogenetic Networks}
\author{%
    Joan Carles Pons\textsuperscript{a,*}, Pau Vives López\textsuperscript{a}, Yukihiro Murakami\textsuperscript{b}, Leo van Iersel\textsuperscript{b}.
     \\[1em]
    \textsuperscript{a} Department of Mathematics and Computer Science, \\ 
    Universitat de les Illes Balears, Spain \\[1em]
        \textsuperscript{b} Delft Institute of Applied Mathematics, Delft University of Technology \\[1em]
    \textsuperscript{*} Corresponding author: joancarles.pons@uib.es
}
\begin{document}

\begin{abstract}
Reticulate evolution can be 
modelled using phylogenetic networks. 
Tree-based networks, which are one of the more general classes of phylogenetic networks, 
have recently gained eminence for its ability to represent evolutionary histories with an underlying tree structure. 
To better understand tree-based networks, numerous characterizations have been proposed, based on tree embeddings, matchings, and arc partitions.
Here, we build a bridge between two arc partition characterizations, namely maximal fence decompositions and cherry covers.
Results on cherry covers have been found for general phylogenetic networks.
We first show that the number of cherry covers is the same as the number of support trees (underlying tree structure of tree-based networks) for a given semi-binary network. 
Maximal fence decompositions have only been defined thus far for binary networks (constraints on vertex degrees). 
We remedy this by generalizing fence decompositions to 
non-binary networks, and using this, we characterize semi-binary tree-based networks in terms of forbidden structures.
Furthermore, we give an explicit enumeration of cherry covers of semi-binary networks, by studying its fence decomposition. 
Finally, we prove that it is possible to characterize semi-binary tree-child networks, a subclass of tree-based networks, in terms of the number of their cherry covers.
\end{abstract}

\maketitle

\section{Introduction}

Phylogenetic trees serve as a tool for depicting non-reticulate evolution (e.g.\cite{nei2000molecular}). Their inherent simplicity contrasts with the impossibility of representing more complex evolutionary scenarios involving reticulate events like hybridizations, recombinations, or lateral gene transfers. Phylogenetic networks extend phylogenetic trees, allowing for the representation of these more complex situations \cite{huson2010phylogenetic,bapteste2013networks}.

The space of phylogenetic networks is divided into classes to reproduce different biologically significant scenarios or to explore challenging mathematical and computational problems. 
One can find a review of such classes in \cite{kong2022classes}. 
One of the ongoing discourses in literature is the debate regarding the nature of evolution. 
It revolves around whether evolution predominantly follows a tree-like pattern, punctuated by occasional horizontal occurrences \cite{daubin2003phylogenetics}, or if evolution inherently exhibits network-like characteristics without any resemblance to a tree structure \cite{dagan2006tree}. 
Tree-based networks \cite{francis2015phylogenetic} were introduced to represent the networks of the first type.  
In graph theory terms, such networks are those that have a rooted spanning tree (called a \emph{support tree}) for which the leaves are those of the network. Suppressing elementary nodes (nodes of indegree and outdegree equal to one) in the support tree results in a \emph{base-tree} as the underlying tree structure for the network. 
In general, a tree-based network can have more than one support tree, and possibly more than one base tree (different support trees can give rise to the same base trees). 
Tree-based networks include several relevant subclasses of phylogenetic networks, including, among others, tree-child networks \cite{cardona2008comparison}. 
Equivalently, a network is tree-child if and only if each embedded tree, that is,  is a support tree.

In addition to the network classes mentioned above, one typically characterizes phylogenetic networks based on the degree of vertices, that is, the number of arcs that every vertex is allowed to be incident to.
This characterization is useful, and oftentimes necessary when there are ambiguities in the order of how evolutionary events have unfolded.
These ambiguities are caused by insufficient data (\emph{soft polytomies}) or a simultaneous divergence of multiple species from a single speciation event (\emph{hard polytomies})~\cite{sayyari2018testing}; we also encounter the same ambiguities in reticulate evolution~\cite{ottenburghs2019multispecies}.
Roughly speaking, one considers \emph{binary} networks when assuming no ambiguities are present, \emph{semi-binary} networks when assuming ambiguities are present for reticulate evolution, and \emph{non-binary} networks when assuming ambiguities are present in both speciations and reticulate evolution.
Here, non-binary means that the network is `not necessarily binary', and thus the class of binary networks are contained in the class of semi-binary networks, which is itself contained in the class of non-binary networks.
A formal definition for the three classes is given in \Cref{subsec:PhyloNet}.
Note that in the literature, \emph{semi-binary} is sometimes used to describe networks that have ambiguities in speciation events~\cite{jetten2016nonbinary}. 



In this paper, we study and build a bridge between two characterizations of tree-based networks called \emph{cherry covers} and \emph{fence decompositions}, which are both based on arc partitions.  
The concept of cherry covers relative to phylogenetic networks was introduced in  \cite{van2021unifying}. 
Roughly speaking, a cherry cover refers to a partition of the network arcs into sets of sizes two and three, such that each of these sets contains arcs that share endpoints in some way.
It was shown that non-binary networks are tree-based if and only if they have a cherry cover.
On the other hand, fence decompositions of phylogenetic networks were introduced in \cite{zhang2016tree,hayamizu2021structure}, and similarly to cherry covers, it outlines a systematic approach to breaking down any network into its ``fundamental'' substructures.
Roughly speaking, it is an arc partition in which every set is an ``up-down path'' of arcs, meaning that each arc shares at least one and at most two endpoints with other arcs in the set.
This partition can be used to characterize binary tree-based networks in terms of forbidden structures, or to count support trees in binary tree-based networks. 
We give formal definitions and known results on cherry covers and fence decompositions in \Cref{subs:cherry} and \Cref{subsec:bindecomp}, respectively.




Both cherry covers and fence decomposition have relevant implications for the study of tree-based networks; even so, the concepts have not been studied in a framework in which both can be handled together. 
In \Cref{sec:cv_suppT}, we prove that the number of cherry covers is the same as the number of support trees for semi-binary tree-based networks. 
In Section \ref{sec:fences_vs_cv}, we show that the number of cherry covers (and thus also the number of support trees) of a binary tree-based network can be written as a formula involving the elements of its fence decomposition.
Notice that the problem of counting the number of support trees in binary tree-based networks have been previously solved in \cite{jetten2016nonbinary, pons2019tree} by the association of a bipartite graph to the network and in \cite{hayamizu2021structure} by the use of the fence decomposition.  
In Section \ref{sec:extension} we extend the fence decomposition of binary phylogenetic networks introduced in \cite{hayamizu2021structure} to non-binary networks. This requires the introduction of a new set of fence structures, and in so doing, we extend problems defined for binary networks to the non-binary setting.
In Section \ref{subsec:CharacterizationSBTBN} we solve two of these problems. 
We extend to semi-binary tree-based networks (a) the problem of counting the number of their cherry covers (and therefore the number of their support trees) from the ``new'' fence decomposition, mimicking the results from Section \ref{sec:fences_vs_cv}, and (b) the characterisation of semi-binary tree-based networks in terms of forbidden structures. 
Both generalisations require similar approaches, albeit much more technical and tricky than their binary counterparts. 
Finally, in Section \ref{sec:TC}, we prove that the subclass of semi-binary tree-child networks can be characterised in terms of the number of their cherry covers. The paper ends with some concluding remarks in \Cref{sec:Conc}.

\section{Preliminaries}\label{sec:Prelim}

In this section we give some definitions and results that will be used throughout the manuscript. 

\subsection{Phylogenetic Networks}\label{subsec:PhyloNet}

    A \emph{(non-binary) phylogenetic network}, or simply a \emph{network}, $N=(V,A)$
    on a set $X$ of \emph{taxa}, is a directed acyclic graph $(V,A)$ without parallel arcs such that any node $u\in V$ is either: 
    \begin{enumerate}
        \item a \emph{root}, with $\indeg u=0$, $\outdeg u=1$ (and there can only be one such node), or
        \item a \emph{leaf}, with $\indeg u=1$, $\outdeg u=0$, or
        \item a \emph{tree node}, with $\indeg u=1$, $\outdeg u\geq 2$, or
        \item a \emph{reticulation}, with $\indeg u\geq 2$, $\outdeg u=1$,
    \end{enumerate}
    together with a  bijective function between $X$ and the set of leaves. Vertices in $V\setminus X$ are called \emph{interior nodes}.

The arc incident to the root is called the \emph{root arc}, and arcs that feed into a reticulation node are called \emph{reticulation arcs}.
A network is \emph{semi-binary} if all tree nodes have outdegree $2$ (reticulations with indegree greater than $2$ are allowed), and it is \emph{binary} if, in addition to these conditions, all reticulations have indegree $2$. 
We write non-binary to mean \emph{not necessarily binary}. So the class of binary networks are contained in the class of semi-binary networks, which are themselves contained in the class of non-binary networks.

In this paper we focus on some relevant subclasses of phylogenetic networks, mainly tree-based and tree-child networks. 


A (non-binary) network $N$ is \emph{tree-based} \cite{francis2015phylogenetic,jetten2016nonbinary} with \emph{base-tree} $T$ when $N$ can be obtained from $T$ by the following process: 
\begin{enumerate}
    \item subdivide some arcs of $T$ generating elementary (indegree and outdegree equal to one) nodes called \emph{attachment points};
    \item add new arcs between pairs of attachment points and from nodes in $T$ to attachment nodes, in such a way that $N$ remains acyclic, without parallel arcs, and such that each attachment point maintains indegree or outdegree 1;
    \item suppress every elementary node. 
\end{enumerate}
Removing all arcs added in step (2) gives a subdivision of~$T$, called a \emph{support tree} of $N$.
In other terms, a tree (possibly containing elementary nodes) obtained from $N$ by removing all but one incoming reticulation arcs from each reticulation without creating new leaves is a support tree. 
Note that the set of vertices of a support tree for $N$ is also the set of vertices of $N$.


A (non-binary) network is \emph{tree-child} \cite{cardona2008comparison} if every interior node has at least one child which is a tree node or a leaf. The class of tree-child networks is included in the class of tree-based networks. 


\subsection{Cherry covers} 
\label{subs:cherry}

We adapt here some definitions and results from \cite{van2021unifying}. Let $N=(V,A)$ be a semi-binary phylogenetic network.



A \emph{cherry shape} in $N$ is a subgraph of $N$ composed by three different nodes $x,y,p \in V$ and two arcs $px, py \in A$. We denote the cherry shape by its set of arcs $\{px, py\}$. We will refer to $p$ as the \emph{internal} node and to $x$ and $y$ as the \emph{terminal} nodes.
A \emph{reticulated cherry shape} in $N$ is a subgraph of $N$ composed by four different nodes $x,y,p_x,p_y \in V$ and three arcs $p_xx,p_yp_x,p_yy\in A$, with $p_y$ a tree node and $p_x$ a reticulation. Similarly as in the previous case, we denote the reticulated cherry shape as $\{p_xx,p_yp_x,p_yy\}$, and we call $p_x, p_y$ the internal nodes, $x, y$ the terminal nodes, and~$p_yp_x$ the \emph{middle arc}. 
We refer to both cherry and reticulated cherry shape as \emph{shapes}.

A set $P$ of shapes is a \emph{cherry cover} for a binary network $N=(V,A)$, if every arc in $A$, apart from the root arc, is contained in a single shape of $P$.
To adapt the concept of cherry covers to semi-binary networks, we require the following definition. 
The \emph{bulged version} of a semi-binary network $N$, denoted by $B(N)$, is the multigraph obtained from $N$ by replacing the outgoing arc of each reticulation node $u$ with $\indeg u=k$ by $k-1$ parallel arcs. Notice that the bulged version of a binary network is (isomorphic to) the network itself.
Then a \emph{cherry cover} of the bulged version $B(N)$ of a semi-binary network $N$ is a set $P$ of shapes such that each arc of $B(N)$, apart from the root arc, is contained in a single shape of $P$. See an example in Figure \ref{fig.5}.

Cherry covers can be used to characterize the class of tree-based networks.

\begin{thm}[adapted from Theorem 3.3 of \cite{van2021unifying}]\label{thm:TBiffCC}
A semi-binary phylogenetic network $N$ is tree-based if, and only if, $B(N)$ has a cherry cover.
\end{thm}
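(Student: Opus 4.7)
The plan is to prove the equivalence by explicit constructions in both directions, mediated by the correspondence between the tree edges of a support tree of $N$ and the middle arcs of a cherry cover of $B(N)$.

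For the forward direction, fix a support tree $T$ of $N$. The idea is that an incoming arc $pu$ of a reticulation $u$ is \emph{non-tree} (absent from $T$) precisely when I will use $pu$ as the middle arc of a reticulated cherry shape. A short analysis yields three structural consequences of $T$ being a support tree, using the no-new-leaf requirement together with the semi-binary outdegree-$2$ condition on tree nodes: (a) the tail $p$ of every non-tree arc $pu$ is a tree node (otherwise the reticulation $p$, whose only outgoing arc is $pu$, would be a new leaf in $T$); (b) no tree node $p$ is the tail of two non-tree arcs (otherwise both outgoing arcs of $p$ would be absent from $T$, making $p$ a new leaf); and (c) if $pu$ is non-tree, then the other child $y$ of $p$ differs from the child $c$ of $u$ (otherwise the arc $pc$ would be forced to be a second non-tree outgoing arc of $p$, contradicting (b)). With these in hand I build the cover as follows: for each reticulation $u$ with $\indeg u = k$, bijectively pair the $k-1$ non-tree incoming arcs of $u$ with the $k-1$ parallel arcs from $u$ to its child $c$ in $B(N)$, and for each pair add the reticulated cherry shape $\{uc, pu, py\}$, whose four nodes are distinct by (a) and (c); for every tree node having no outgoing non-tree arc, add the cherry shape consisting of its two outgoing arcs. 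A routine case check using (a)--(c) verifies that every non-root arc of $B(N)$ is covered exactly once.

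For the converse, let $P$ be a cherry cover of $B(N)$. The $k-1$ parallel arcs at any reticulation $u$ in $B(N)$ are outgoing arcs of a reticulation, so they cannot be internal arcs of a cherry shape; each must therefore be the $p_x x$ arc of a distinct reticulated cherry shape with $p_x = u$, designating $k-1$ incoming arcs of $u$ as middle arcs, each with a tree-node tail by definition. Let $T$ be the subdigraph of $N$ obtained by deleting, for every reticulation, the (underlying) arcs chosen as middle arcs. Then every reticulation has indegree $1$ in $T$; no reticulation loses its unique outgoing arc, since middle arcs have tree-node tails; and no tree node loses both outgoing arcs, since if one of them is a middle arc then the other is the $p_y y$ arc of the same shape. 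Consequently $T$ has the same leaf set as $N$, each reticulation becomes elementary in $T$, and $T$ is a support tree of $N$.

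The main obstacle is the forward direction's structural claims (a)--(c): they are where the tree-based hypothesis and the semi-binary assumption interact most delicately, and they ensure that the naive pairing really does assemble into a collection of valid reticulated cherry shapes with four distinct nodes. Once (a)--(c) are established, both constructions are essentially mechanical.
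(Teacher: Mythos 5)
Your argument is correct. The paper does not actually prove Theorem~\ref{thm:TBiffCC} itself (it is imported from Theorem~3.3 of the cited work), but your two constructions---deleting the middle arcs of a cherry cover to obtain a support tree, and covering the removed reticulation arcs of a support tree as middle arcs to assemble a cover---are exactly the ones the paper deploys in its proof of Proposition~\ref{prop:CC=ST-SB}, with your structural claims (a)--(c) supplying the distinctness and no-new-leaf details that the paper leaves implicit.
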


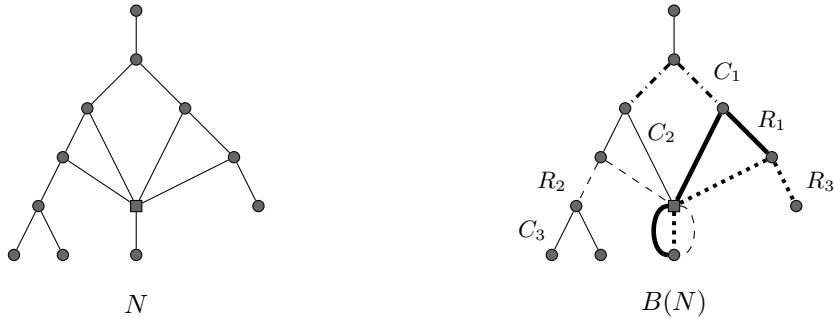
\begin{figure}[h]
\centering

\begin{tikzpicture}[scale=0.65]

\draw (0,1)                 node[mi]    (root)      {};
\draw (0,0)                 node[mi]    (A)      {};
\draw (1,-1)                 node[mi]    (A1)      {};
\draw (-1,-1)                 node[mi]    (A2)      {};
\draw (-1.5,-2)                 node[mi]    (A3)      {};
\draw (-2,-3)                 node[mi]    (A6)      {};
\draw (0,-3)                 node[bx]    (A4)      {};
\draw (2,-2)                 node[mi]    (A5)      {};
\draw (2.5,-3)                 node[mi]    (A7)      {};
\draw (0,-4)                 node[mi]    (A8)      {};
\draw (-2.5,-4)                 node[mi]    (A9)      {};
\draw (-1.5,-4)                 node[mi]    (A10)      {};


\draw[style={-}](root)to(A);
\draw[style={-}](A)to(A1);
\draw[style={-}](A)to(A2);
\draw[style={-}](A3)to(A2);
\draw[style={-}](A4)to(A1);
\draw[style={-}](A5)to(A1);
\draw[style={-}](A2)to(A4);
\draw[style={-}](A3)to(A4);
\draw[style={-}](A3)to(A6);
\draw[style={-}](A5)to(A4);
\draw[style={-}](A5)to(A7);
\draw[style={-}](A4)to(A8);
\draw[style={-}](A6)to(A9);
\draw[style={-}](A6)to(A10);


\draw (7+2+2+0,1)                 node[mi]    (root)      {};
\draw (7+2+2+0,0)                 node[mi]    (A)      {};
\draw (7+2+2+1,-1)                 node[mi]    (A1)      {};
\draw (7+2+2+-1,-1)                 node[mi]    (A2)      {};
\draw (7+2+2+-1.5,-2)                 node[mi]    (A3)      {};
\draw (7+2+2+-2,-3)                 node[mi]    (A6)      {};
\draw (7+2+2+0,-3)                 node[bx]    (A4)      {};
\draw (7+2+2+2,-2)                 node[mi]    (A5)      {};
\draw (7+2+2+2.5,-3)                 node[mi]    (A7)      {};
\draw (7+2+2+0,-4)                 node[mi]    (A8)      {};
\draw (7+2+2+-2.5,-4)                 node[mi]    (A9)      {};
\draw (7+2+2+-1.5,-4)                 node[mi]    (A10)      {};

\node (label) at (0,-5) {$N$};


\draw[style={-}](root)to(A);
\draw[style={-},dashdotted,line width = 1.25](A)to(A1);
\draw[style={-},dashdotted,line width = 1.25](A)to(A2);
\draw[style={-}](A3)to(A2);
\draw[style={-},line width = 1.75](A4)to(A1);
\draw[style={-},line width = 1.75](A5)to(A1);
\draw[style={-}](A2)to(A4);
\draw[style={-},dashed](A3)to(A4);
\draw[style={-},dashed](A3)to(A6);
\draw[style={-},line width = 1.5,dotted](A5)to(A4);
\draw[style={-},line width = 1.5,dotted](A5)to(A7);
\draw[style={-},dashed, in = 0, out = 0](A4)to(A8);
\draw[style={-},line width = 1.75, in = 180, out = 180](A4)to(A8);
\draw[style={-},line width = 1.5,dotted](A4)to(A8);
\draw[style={-}](A6)to(A9);
\draw[style={-}](A6)to(A10);

\node (label) at (11,-5) {$B(N)$};
\node (label) at (11+1.1,-0.25) {\small $C_1$};
\node (label) at (11-2.9,-3.5) {\small $C_3$};
\node (label) at (11-1.75+1.25+0.25,-1.5) {\small $C_2$};
\node (label) at (11+2,-1.25) {\small $R_1$};
\node (label) at (11-2.5,-2.5) {\small $R_2$};
\node (label) at (11+3,-2.5) {\small $R_3$};
\end{tikzpicture}
\caption{\small A semi-binary network $N$ and its bulged version $B(N)$ with a possible cherry cover $\{C_1,C_2,C_3,R_1,R_2,R_3\}$. The different line type indicates the edges of each cherry shape.}
\label{fig.5}
\end{figure}




\subsection{Fence decomposition}\label{subsec:bindecomp}

We include here another static characterization of binary tree-based networks based on an arc partition, called \emph{maximum zig-zag 
trails}~\cite{zhang2016tree, hayamizu2021structure}. 
Let~$N$ be a binary network. A \emph{zig-zag trail} of length~$k$ is a sequence~$(a_1,a_2,\ldots, a_k)$ of arcs where~$k\ge 1$, where either $\head(a_{2i-1}) = \head(a_{2i})$ and $\tail(a_{2j})=\tail(a_{2j+1})$ holds for~$i\in \left[\lfloor \frac{k}{2}\rfloor \right] = \{1,2,\ldots, \lfloor \frac{k}{2}\rfloor \}$ and~$j\in \left[ \lfloor \frac{k-1}{2}\rfloor \right]$, or $\tail(a_{2i-1}) = \tail(a_{2i})$ and $\head(a_{2j}) = \head(a_{2j+1})$ holds for~$i\in \left[ \lfloor \frac{k}{2}\rfloor \right]$ and~$j\in \left[ \lfloor \frac{k-1}{2}\rfloor \right]$. We 
call a zig-zag trail~$Z$ \emph{maximal} if there is no zig-zag trail that 
contains~$Z$ as a subsequence. Depending on the nature of~$\tail(a_1)$ 
and~$\tail(a_k)$, we have four possible maximal zig-zag trails.

\begin{itemize}
	\item \emph{Crowns}: $k\ge 4$ is even and $\tail(a_1) = \tail(a_k)$ or 
	$\head(a_1) = \head(a_k)$.
    By reordering the arcs, assume henceforth that~$\head(a_1) = \head(a_k)$.
	\item \emph{$M$-fences}: $k\ge 2$ is even, it is not a crown, and $\tail(a_i)$ is a tree vertex for every~$i\in[k]$.
	\item \emph{$N$-fences}: $k\ge 1$ is odd and $\tail(a_1)$ or~$\tail(a_k)$, but not both, is 
	a reticulation. By reordering the arcs, assume henceforth that~$\tail(a_1)$ is a reticulation and~$\tail(a_k)$ a tree vertex. 
	\item \emph{$W$-fences}: $k\ge 2$ is even and both~$\tail(a_1)$ 
	and~$\tail(a_k)$ are 
	reticulations.
\end{itemize}
We call a set~$S$ of maximal zig-zag trails a \emph{fence decomposition} 
of~$N$ if the elements of~$S$ partition all arcs, except for the root arc, 
of~$N$. See an example in Figure \ref{fig:bstruct}. One can characterize networks using such a decomposition.

\begin{thm}[adapted from Theorem 4.2 of 
\cite{hayamizu2021structure}]\label{thm:UniqueMaxZigZag}
	Any binary network~$N$ has a unique fence decomposition.
\end{thm}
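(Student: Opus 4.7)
The plan is to establish existence and uniqueness simultaneously by showing that the partition into maximal zig-zag trails is forced by the local binary structure of~$N$.

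First, I would define a relation $\sim$ on the set of non-root arcs of~$N$: two distinct arcs $a,a'$ are \emph{directly related} if either $\head(a)=\head(a')$ is a reticulation, or $\tail(a)=\tail(a')$ is a tree vertex. Let $\sim$ be the reflexive, symmetric, transitive closure of this relation. The central observation is that, since $N$ is binary, every reticulation has indegree exactly~$2$ and every non-root tree vertex has outdegree exactly~$2$. Hence each non-root arc~$a$ has at most one \emph{head-partner} (another arc sharing its head, whenever $\head(a)$ is a reticulation) and at most one \emph{tail-partner} (another arc sharing its tail, whenever $\tail(a)$ is a tree vertex). This means $\sim$-classes admit a linear or cyclic structure rather than a branching one.

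Second, I would show that each $\sim$-class, arranged by repeatedly applying head- and tail-partnering starting from any arc, is a maximal zig-zag trail. The alternation between head-sharing and tail-sharing pairs is automatic: once we pair an arc~$a_i$ with its head-partner $a_{i+1}$, the next extension from $a_{i+1}$ must take place via its tail (its head is already ``used'' by~$a_i$), and conversely. Extension terminates either because an endpoint arc reaches the root side, a leaf, or a vertex whose partner type is absent, or because extending from both ends loops back to the starting arc. In the first case we obtain an open trail; in the second, a crown.

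Third, I would verify the classification into the four cases. In a crown, the cycle closes so $k$ is even and, by reordering, $\head(a_1)=\head(a_k)$ as required. In an open trail the nature of $\tail(a_1)$ and $\tail(a_k)$, together with the alternation, determines whether we have an $M$-fence (both tails are tree vertices), an $N$-fence (exactly one tail is a reticulation; parity of~$k$ is then odd), or a $W$-fence (both tails are reticulations). Uniqueness of the fence decomposition is then immediate: the partition is exactly the $\sim$-partition, which is determined by~$N$ alone, so no other fence decomposition can differ from it.

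The main obstacle I expect is handling the crown case cleanly, since the iterative extension returns to its starting arc and one must show (a)~that this loop closes with even length so that the alternation of head- and tail-sharing remains consistent, and (b)~that after reordering one can indeed arrange the arcs so that $\head(a_1)=\head(a_k)$. A secondary technical point is checking that the endpoint-parity dichotomies cleanly separate $M$-, $N$-, and $W$-fences, which forces a short case analysis on the types of $\tail(a_1)$ and $\tail(a_k)$ together with the parity induced by the alternation pattern starting from~$a_1$.
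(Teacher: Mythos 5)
Your argument is correct: in a binary network each non-root arc has at most one head-partner and at most one tail-partner, so the partner graph has maximum degree two, its components are paths or even cycles, and these components are exactly the maximal zig-zag trails, which forces both existence and uniqueness of the decomposition. The paper itself gives no proof of this statement (it is imported from Theorem~4.2 of the cited work of Hayamizu), and your argument is essentially the same determinism-of-partners argument used there, so there is nothing to flag beyond the minor definitional edge cases you already identify (closing the crown cycle with even length and the parity case analysis distinguishing $M$-, $N$-, and $W$-fences).
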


One can also obtain a forbidden structure characterization of binary tree-based networks using the decomposition.

\begin{lem}[adapted from Corollary 4.6 of \cite{hayamizu2021structure}]\label{lem:TB=NoW}
    Let~$N$ be a binary network. Then~$N$ is tree-based if and only if it has no $W$-fences.
\end{lem}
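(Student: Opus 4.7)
The plan is to reduce tree-basedness of the binary network $N$ to the existence of a cherry cover via \Cref{thm:TBiffCC} (since $B(N)=N$ for a binary $N$), and then to characterize the existence of such a cherry cover in terms of the unique fence decomposition given by \Cref{thm:UniqueMaxZigZag}. The key structural observation is that in any cherry cover of a binary network, the unique outgoing arc $e_r$ of each reticulation $r$ must appear as the terminal arc $p_xx$ of a reticulated cherry shape with $p_x=r$; indeed $\tail(e_r)=r$ has outdegree $1$, so $e_r$ can be neither the internal arc of a cherry shape nor a middle or third arc of a reticulated cherry shape. The middle and third arcs of this reticulated cherry are then the two outgoing arcs of some tree-vertex parent of $r$, and because a binary tree vertex has outdegree exactly $2$ these two arcs always form a shared-tail pair of a single maximal zig-zag trail. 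So every reticulation must ``absorb'' exactly one shared-tail pair from the fence containing its two incoming arcs, and distinct reticulations must absorb distinct shared-tail pairs.

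I would then count, in each zig-zag trail, the number $h$ of shared-head pairs (the internal reticulations that require absorption) against the number $t$ of shared-tail pairs (the possible targets). A direct case analysis gives $(h,t)=(m-1,m)$ for $M$-fences of length $2m$, $(h,t)=(m,m)$ for $N$-fences of length $2m+1$ and for crowns of length $2m$, and the critical $(h,t)=(m,m-1)$ for $W$-fences of length $2m$. Together with the adjacency constraint that each internal reticulation can absorb only one of its at most two neighboring shared-tail pairs along the trail, the absorption requirement becomes a matching problem on a path ($M$- and $N$-fences), on a cycle (crowns), or on an ``overloaded'' path ($W$-fences). For the first three, a simple greedy assignment yields a valid matching; for $W$-fences a pigeonhole argument rules out any matching, since the internal reticulations at the two ends of the fence each have only a single candidate shared-tail pair available (their other incoming arc has a reticulation tail and so cannot be a middle arc), and these forced choices then cascade into a conflict.

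Both directions of the equivalence follow. If $N$ contains a $W$-fence, the absorption requirement already fails inside that fence, so no cherry cover exists and $N$ is not tree-based by \Cref{thm:TBiffCC}. Conversely, if $N$ has no $W$-fences, we assemble a cherry cover by taking, in each fence, any valid absorption matching: absorbed shared-tail pairs become reticulated cherry shapes (together with the terminal arc located in a different fence), while unabsorbed shared-tail pairs become plain cherry shapes. The main obstacle is to verify the consistency of this local-to-global assembly, in particular that reticulated cherry shapes which straddle two fences (middle and third arcs in one fence, terminal arc in another) combine without conflict. This is straightforward once one notes that every arc lies in a unique fence and that the two outgoing arcs of any tree vertex live in the same shared-tail pair of a single fence, so the global assignment is automatically well-defined and every arc of $N$ ends up in exactly one shape.
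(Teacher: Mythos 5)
The paper does not actually prove this lemma: it is imported verbatim as Corollary 4.6 of \cite{hayamizu2021structure}, where the original argument works directly with support trees (analysing which subsets of reticulation arcs can be deleted without creating new leaves). Your route is genuinely different and, as far as I can check, correct: you pass through the cherry-cover characterization (\Cref{thm:TBiffCC}, with $B(N)=N$ in the binary case) and turn the existence of a cover into an injective ``absorption'' of shared-head pairs (reticulations) into adjacent shared-tail pairs inside each maximal zig-zag trail. Your key observation --- that the outgoing arc of a reticulation can only be the terminal arc $p_xx$ of its own reticulated cherry shape, forcing the middle and third arcs to be a shared-tail pair adjacent to the shared-head pair of that reticulation in the same fence --- is sound, your counts $(h,t)$ per fence type are right, and the pigeonhole $h=m>m-1=t$ for $W$-fences, together with the greedy matchings for $M$-fences, $N$-fences and crowns, gives both directions; the global consistency check is handled correctly by the uniqueness of the fence containing any shared-tail pair and the fact that each reticulation's outgoing arc is claimed by exactly one shape. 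What is worth pointing out is that your argument essentially anticipates the paper's own later machinery: the forced matching on $N$-fences is \Cref{lem:n_fence_unique}, the $k/2$ choices on $M$-fences and the $2$ choices on crowns are \Cref{lem:M-fenceSizeCC,lem:CrownSizeCC}, and the converse assembly is exactly the strategy of \Cref{thm:ForbiddenStructure} in the semi-binary setting. The trade-off is that your proof depends on \Cref{thm:TBiffCC} (itself nontrivial), whereas Hayamizu's original proof is self-contained at the level of support trees; in exchange, your version yields the enumeration of cherry covers (\Cref{thm:NumberOfCherryCovers}) almost for free.
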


For non-crown maximal zig-zag trails, we require the notion of \emph{endpoints} to be used in later proofs. 
Let~$A = (a_1,\ldots, a_k)$ be a maximal zig-zag trail.
We shall write~$End(A)$ to denote the set of endpoints of~$A$.
If~$A$ is an $M$-fence, then $End(A) = \{\head(a_1), \head(a_k)\}$.
If~$A$ is an $N$-fence, then $End(A) = \{\tail(a_1), \head(a_k)\}$. 
If~$A$ is a $W$-fence, then $End(A) = \{\tail(a_1), \tail(a_k)\}$.

\begin{figure}[ht]
\centering

\begin{tikzpicture}[scale=1]


\draw (+2-1.5+4-0.5+ 1,  0+0)        node[mini]    (B)          {};
\draw (+2-1.5+4-0.5+ 2,  0+0)        node[mini]    (C)          {};
\draw (+2-1.5+4-0.5+ .5, 0+ -1)        node[minibx]    (A1)          {};
\draw (+2-1.5+4-0.5+ 1.5,0+ -1)        node[minibx]    (B1)          {};

\node (label) at (-0.5+2-2.5+4,-.5) {\small$\bullet$ \textbf{Crown:}};

\draw[style={-}](A1)to(B);
\draw[style={-}](B)to(B1);
\draw[style={-}](B1)to(C);
\draw[style={-}](C)to(A1);

\draw (-0.5+2+4-1+0.25, -.5)        node[label]    (a1)          {\tiny $a_1$};
\draw (-0.5+2+4-1+.75, -.4)        node[label]    (a1)          {\tiny $a_2$};
\draw (-0.5+2+4-1+1.25, -.75)        node[label]    (a1)          {\tiny $a_3$};
\draw (-0.5+2+4-1+0.5, -.85)        node[label]    (a1)          {\tiny $a_4$};



\draw (-0.25+9+1+ 0,  +0)          node[minibx]    (A)          {};
\draw (-0.25+9+1+ 1,  +0)        node[mini]    (B)          {};
\draw (-0.25+9+1+ 2,  +0)        node[minibx]    (C)          {};
\draw (-0.25+9+1+ .5, + -1)        node[minibx]    (A1)          {};
\draw (-0.25+9+1+ 1.5,+ -1)        node[minibx]    (B1)          {};

\node (label) at (2+9+-2.75,-.5) {\small$\bullet$ \textbf{W-fence:}};

\draw[style={-}](A)to(A1);
\draw[style={-}](A1)to(B);
\draw[style={-}](B)to(B1);
\draw[style={-}](B1)to(C);

\draw (1-0.25+9+0.25,    -.5)        node[label]    (a1)          {\tiny $a_1$};
\draw (1-0.25+9+.75,     -.5)        node[label]    (a1)          {\tiny $a_2$};
\draw (1-0.25+9+1.25,    -.5)        node[label]    (a1)          {\tiny $a_3$};
\draw (1-0.25+9+1.75,    -.5)        node[label]    (a1)          {\tiny $a_4$};


\draw (+2-1.5+4-0.5+ 1,  -2+0)        node[mini]    (B)          {};
\draw (+2-1.5+4-0.5+ 2,  -2+0)        node[mini]    (C)          {};
\draw (+2-1.5+4-0.5+ .5, -2+-1)        node[mini]    (A1)          {};
\draw (+2-1.5+4-0.5+ 1.5,-2+ -1)        node[minibx]    (B1)          {};
\draw (+2-1.5+4-0.5+ 2.5,-2+ -1)        node[mini]    (C1)          {};

\node (label) at (1.6+2-1.5+4-0.5+-2.5,-2-.5) {\small$\bullet$ \textbf{M-fence:}};

\draw[style={-}](A1)to(B);
\draw[style={-}](B)to(B1);
\draw[style={-}](B1)to(C);
\draw[style={-}](C)to(C1);

\draw (+2-1.5+4-0.5++.75, -2-.5)        node[label]    (a1)          {\tiny $a_1$};
\draw (+2-1.5+4-0.5++1.25,-2 -.5)        node[label]    (a1)          {\tiny $a_2$};
\draw (+2-1.5+4-0.5++1.75,-2 -.5)        node[label]    (a1)          {\tiny $a_3$};
\draw (+2-1.5+4-0.5++2.25,-2 -.5)        node[label]    (a1)          {\tiny $a_4$};



\draw (-0.25+9+1+ 0,  -2+0)          node[minibx]    (A)          {};
\draw (-0.25+9+1+ 1,  -2+0)        node[mini]    (B)          {};
\draw (-0.25+9+1+ .5, -2+-1)        node[minibx]    (A1)          {};
\draw (-0.25+9+1+ 1.5,-2+ -1)        node[mini]    (B1)          {};

\node (label) at (-5+0.25-0.25+9+1-0.25+1+5+-2.5,-2-.5) {\small$\bullet$ \textbf{N-fence:}};

\draw[style={-}](A)to(A1);
\draw[style={-}](A1)to(B);
\draw[style={-}](B)to(B1);

\draw (-0.25+9+1+0.25,-2 -.5)        node[label]    (a1)          {\tiny $a_1$};
\draw (-0.25+9+1+.75,-2-.5)        node[label]    (a1)          {\tiny $a_2$};
\draw (-0.25+9+1+1.25, -2-.5)        node[label]    (a1)          {\tiny $a_3$};



\draw (7,-2-3-0.25)          node[mi]    (A)          {};
\draw (7,-2-3+0.5)          node[mi]    (root)          {};

\draw (5    ,-2+-4)            node[mi]    (A1)          {};
\draw (3.5  ,-2+-5.5)            node[mi]    (A2)          {};
\draw (6    ,-2+-4.5)            node[mi]    (A3)          {};
\draw (5    ,-2+-5.5)            node[mi]    (A4)          {};
\draw (7    ,-2+-5.5)            node[mi]    (A5)          {};
\draw (5-0.5,-2+-6.5)        node[bx]    (A6)          {};
\draw (7-0.5,-2+-6.5)        node[bx]    (A7)          {};
\draw (2.5  ,-2+-8)            node[mi]    (A8)          {};
\draw (4    ,-2+-7.25)            node[bx]    (A9)          {};
\draw (4    ,-2+-8)            node[mi]    (A12)          {};
\draw (9    ,-2+-4)            node[mi]    (B1)          {};
\draw (8    ,-2+-5.5)            node[mi]    (B2)          {};
\draw (10   ,-2+-5.5)           node[mi]    (B3)          {};
\draw (10 -0.25+0.1 -2-0.25,-2+-5.5-0.5)           node[mi]    (B11)          {};
\draw (10 -0.25 -2-0.25+2,-2+-8)           node[mi]    (B12)          {};
\draw (7.5  ,-2+-6.5)            node[mi]    (B4)          {};
\draw (8.5  ,-2+-6.5)            node[bx]    (B5)          {};
\draw (10.5 ,-2+-8)           node[mi]    (B6)          {};
\draw (7    ,-2+-7.25)            node[bx]    (B7)          {};
\draw (8    ,-2+-7.25)            node[bx]    (B8)          {};
\draw (7    ,-2+-8)            node[mi]    (B9)          {};
\draw (8    ,-2+-8)            node[mi]    (B10)          {};

\draw[style={-}](A)to(root);
\draw[style={-},dotted,line width = 1.75](A)to(A1);
\draw[style={-},line width = 1.75](A1)to(A2);
\draw[style={-},line width = 1.75](A3)to(A1);
\draw[style={-}](A3)to(A4);
\draw[style={-}](A3)to(A5);
\draw[style={-},line width = 1.75](A6)to(A4);
\draw[style={-},line width = 1.75](A7)to(A5);
\draw[style={-},line width = 1.75](A7)to(A4);
\draw[style={-},line width = 1.75](A6)to(A5);
\draw[style={-},dashdotted,line width = 1.5](A2)to(A8);
\draw[style={-},dashdotted,line width = 1.5](A2)to(A9);
\draw[style={-},dashdotted,line width = 1.5](A6)to(A9);
\draw[style={-}](A9)to(A12);
\draw[style={-},dotted,line width = 1.75](A)to(B1);
\draw[style={-}](B2)to(B1);
\draw[style={-},dotted,line width = 1.25](B2)to(B5);
\draw[style={-},dotted,line width = 1.25](B2)to(B11);
\draw[style={-}](B3)to(B1);
\draw[style={-}](B11)to(B12);
\draw[style={-}](B11)to(B4);
\draw[style={-},dotted,line width = 1.25](B3)to(B5);
\draw[style={-},dotted,line width = 1.25](B3)to(B6);
\draw[style={-},dashed,line width = 1.25](B4)to(B7);
\draw[style={-},dashed,line width = 1.25](B4)to(B8);
\draw[style={-},dashed,line width = 1.25](A7)to(B7);
\draw[style={-},dashed,line width = 1.25](B5)to(B8);
\draw[style={-}](B10)to(B8);
\draw[style={-}](B9)to(B7);

\node (label) at (7,-11) {\LARGE $N$};
\node (label) at (6.5,1.5) {\textbf{Binary structures}};

\node (label) at (5.75,-5.25) { $\hat{M_1}$};
\node (label) at (5.75-1.9,-6.55) { $\hat{M_2}$};
\node (label) at (5.75+3+1.25,-6.75)                        { $\hat{M}_3$};
\node (label) at (5.75-1.9+2+1.1,-7)                        { $\hat{M}_4$};
\node (label) at (5.75+3+1.25+0.75-1-1,-6.75-1.1)           { $\hat{M}_5$};
\node (label) at (5.75+3+1.25+0.75-1-1+.5,-6.75-1.1-1.5)    { $\hat{M}_6$};
\node (label) at (5.75-1.9+2+1.1-1.25+0.25,-7.5)            { $\hat{C}_1$};
\node (label) at (5.75-1.9+2+1.1-1.25+0.75+0.25-0.1,-9.6)   { $\hat{N}_3$};
\node (label) at (5.75-1.9+2+1.1-1.25+0.75+1+0.9,-9.6)      { $\hat{N}_4$};
\node (label) at (5.75-1.9+2+1.1-1.25+0.75,-9)              { $\hat{W}_1$};
\node (label) at (5.75-1.9+2+1.1-1.25-3,-8.5)               { $\hat{N}_1$};
\node (label) at (5.75-1.9+2+1.1-1.25-3+1-0.1,-9.6)         { $\hat{N}_2$};
\end{tikzpicture}
\caption{\small Representation of the different zig-zag trails present in the binary setting and a network $N$ with fence decomposition $S = \cup_{i=1}^6\hat{M}_i\cup_{i=1}^4\hat{N}_i\cup \hat{W}_1\cup \hat{C}_1$. The different line types indicates the arcs of each maximal zig-zag trail.}
\label{fig:bstruct}
\end{figure}

\section{Cherry covers and support trees in semi-binary networks}
\label{sec:cv_suppT}

Recall that cherry covers in the semi-binary setting were defined over the bulged version of the network, obtained by adding, between each reticulation and its child, as many parallel arcs as the indegree of the reticulation considered minus 1. Over this bulged version, the definition of a cherry cover is given as in the binary case. 
These additional parallel arcs may present duplicates when counting the number of cherry covers for a semi-binary network. 
For example, consider two reticulated cherry shapes~$\{a_1,b,c\}$ and~$\{a_2,b,c\}$, where~$a_1$ and~$a_2$ are parallel arcs. 
Clearly, one should consider the shapes to be equivalent; in this vein, we say that such two reticulated cherry shapes are \emph{isomorphic}.
Additionally, we say that two cherry covers $P_1$ and $P_2$ are \emph{isomorphic} if there exists a one-to-one correspondence between their set of reticulated cherry shapes into isomorphic reticulated cherry shapes. Informally, the isomorphism between two cherry covers can be understood as an isomorphism between the parallel arcs added in the bulged version (in a sense that we do not distinguish them). 

We denote by $\mathcal{P}_N$ and $\mathcal{S}_N$ the set of all (non-isomorphic) cherry covers and the set of support trees of a semi-binary network~$N$, respectively. Notice that from Theorem~\ref{thm:TBiffCC}, $\mathcal{P}_{N} \neq \emptyset$ if and only if $N$ is tree-based.  Moreover, by definition, $N$ is tree-based if and only if $\mathcal{S}_{N} \neq \emptyset$. 
In the next proposition we prove that the number of cherry covers coincides with the number of support trees in semi-binary networks.

\begin{prop}\label{prop:CC=ST-SB}
Let~$N$ be a semi-binary network. Then~$|\mathcal{P}_N| = |\mathcal{S}_N|$. 
\end{prop}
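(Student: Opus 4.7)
The plan is to construct an explicit bijection between $\mathcal{S}_N$ and $\mathcal{P}_N$. If $N$ is not tree-based, Theorem~\ref{thm:TBiffCC} and the definition of a support tree give $|\mathcal{P}_N| = |\mathcal{S}_N| = 0$, so we assume $N$ is tree-based.

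The bijection rests on the following observation. In any cherry cover of $B(N)$, the $k-1$ parallel outgoing arcs of a reticulation $u$ with $\indeg_N(u) = k$ can only be covered as the $p_x x$ arcs of reticulated cherry shapes with $p_x = u$. Hence exactly $k-1$ shapes are centered at $u$, each using a distinct incoming arc of $u$ as its middle arc; since $p_y$ must be a tree node, these middle arcs come from $k-1$ distinct tree-node parents of $u$. Consequently, exactly one incoming arc of $u$ in $N$ fails to appear as a middle arc of any shape centered at $u$; call its tail the \emph{kept parent} of $u$ in $P$.

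Given $T \in \mathcal{S}_N$, define $\phi(T)$ as follows. For each reticulation $u$, let $R_u$ be the set of $k-1$ incoming arcs of $u$ removed in passing from $N$ to $T$. Since $T$ creates no new leaves, every arc $vu \in R_u$ must have $v$ a tree node (otherwise $v$ would have outdegree $0$ in $T$). Pair $R_u$ arbitrarily with the $k-1$ parallel outgoing arcs of $u$ in $B(N)$; each pair $(vu, f)$ yields a reticulated cherry shape $\{f, vu, vw\}$, where $w$ is the other child of $v$ in $N$. The remaining arcs of $B(N)$, aside from the root arc, are exactly the pairs $\{va, vb\}$ where $v$ is a tree node with both outgoing arcs retained in $T$; group each such pair as a cherry shape. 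Since different pairings yield isomorphic cherry covers, $\phi(T) \in \mathcal{P}_N$ is well-defined.

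Conversely, given $P \in \mathcal{P}_N$, let $\psi(P)$ retain, for each reticulation $u$, only the incoming arc from its kept parent. The key verification is that no new leaves appear in $\psi(P)$. For a tree node $v$ with outgoing arcs $va, vb$, at most one of these can be a middle arc---otherwise both would simultaneously need to serve as the $p_y y$ arc of the other's shape, violating the partition property---so at least one of $va, vb$ is retained. For a reticulation or root parent $v$ of a reticulation $u$, the arc $vu$ cannot be a middle arc (either $v$ is not a tree node, or $vu$ is the excluded root arc), so $v$ is forced to be the kept parent of $u$ and $vu$ is retained. Verifying that $\phi$ and $\psi$ are mutually inverse is then routine by design, yielding $|\mathcal{P}_N| = |\mathcal{S}_N|$. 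The main obstacle is precisely the "no new leaves" case analysis in the backward direction, especially for tree nodes with two reticulation children.
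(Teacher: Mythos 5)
Your proof is correct and takes essentially the same approach as the paper: both directions rest on the correspondence between the removed incoming reticulation arcs of a support tree and the middle arcs of the reticulated cherry shapes of a cherry cover, with the same "no new leaves" verifications. The paper phrases this as two inequalities rather than an explicit bijection, but the underlying constructions are identical to your $\phi$ and $\psi$.
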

\begin{proof}
    If~$N$ is not tree-based, then~$|\mathcal{P}_N| = |\mathcal{S}_N| = 0$. So suppose that~$N$ is tree-based.
    
    Let~$P$ be a cherry cover of~$N$. 
    Let $\{p_xx,p_yp_x,p_yy\}\in P$ be a reticulated cherry shape. 
    Observe that deleting the middle arc $p_yp_x$ reduces the indegree of~$p_x$ by 1 and the outdegree of~$p_y$ by 1. The degrees of all other vertices remain unchanged. Therefore, the resulting subgraph has the same leafset as that of~$N$.
    Consider $T$, a tree obtained by removing all reticulation arcs of~$N$ that are covered as middle arcs of reticulated cherry shapes in~$P$.
    By the same argument as above,~$T$ is a subgraph of~$N$ with the same leafset as that of~$N$.  
    Thus,~$T$ must be a support tree. So for every cherry cover, we can construct a support tree. 
    This shows that~$|\mathcal{P}_N|\le |\mathcal{S}_N|$.

    Let~$S$ be a support tree of~$N$, obtained by removing all but one incoming reticulation arcs from each reticulation.
    We cover each such deleted arcs~$\{a_1,\ldots, a_k\}$ with the middle arc of a reticulated cherry shape.
    We claim that we can extend the covering to a cherry cover of~$N$.
    Since~$N$ is semi-binary,~$\tail(a_i)$ is a tree vertex of outdegree-2 for each~$i$.
    So covering~$a_i$ as a middle arc also covers the other outgoing arc of~$\tail(a_i)$ and also the outgoing arc of~$\head(a_i)$.
    Covering in this way does not cause any overlaps between the reticulated cherry shapes, except possibly at the parallel outgoing arcs of reticulations (in the bulged version of~$N$).
    Indeed, it is not possible for both outgoing arcs of a tree vertex to be middle arcs, since~$S$ is a support tree of~$N$.
    At this moment, every outgoing parallel arcs of reticulations are covered (in the bulged version of~$N$).
    Let~$u$ be a tree vertex.
    Either both outgoing arcs of~$u$ are covered by a single reticulated cherry shape, or none of its outgoing arcs are covered.
    In the latter case, we cover the two outgoing arcs using a cherry shape.
    Repeating this ensures that all outgoing arcs of all vertices are covered by cherry shapes and reticulated cherry shapes.
    So for every support tree, we can construct a cherry cover.
    It follows that~$|\mathcal{S}_N|\le |\mathcal{P}_N|$.
\end{proof}

\section{Covering fences using cherry covers}
\label{sec:fences_vs_cv}

In this section, we show how one can obtain a cherry cover from a fence decomposition in binary tree-based networks. 
Recall that a network is tree-based if and only if it has a cherry cover (Theorem~\ref{thm:TBiffCC}). Additionally, a network is tree-based if and only if has no $W$-fences in its unique fence decomposition (Lemma~\ref{lem:TB=NoW}).
Note that, in a binary network, exactly one reticulation arc for every reticulation is contained in a reticulated cherry shape as a middle arc.

Let~$N$ be a network and let~$\hat{N}$ be an $N$-fence of~$N$. In what follows, we shall write~$\hat{N} := (a_1,a_2,\ldots, a_{k})$, and we will let~$c_{2j-1}$ denote the arc connecting~$\head(a_{2j-1})$ and its only child for~$j\in\left[\frac{k-1}{2}\right]$. 
We refer to~$a_1$ as the \emph{first arc} of the $N$-fence~$\hat{N}$.
The first lemma states that although a tree-based network may have non-unique cherry covers, the reticulated cherry shapes that cover arcs of $N$-fences are fixed.

\begin{lem} [adapted from Lemma 13 of \cite{van2023making}]\label{lem:n_fence_unique}
    Let~$N$ be a binary tree-based network that has an $N$-fence~$\hat{N}$ of length at least 3. 
    Then every cherry cover of~$N$ contains the reticulated cherry shapes $\{c_{2j-1}, a_{2j},a_{2j+1}\}$ for~$j\in\left[\frac{k-1}{2}\right]$.
\end{lem}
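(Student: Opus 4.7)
The plan is an induction on $j$, driven by a structural claim about how the unique outgoing arc of a reticulation must be covered. First, since $\tail(a_1)$ is a reticulation with outdegree $1$, the zig-zag identities force $\head(a_1)=\head(a_2)$ (the alternative would require two arcs out of $\tail(a_1)$). Thus $\hat{N}$ is of the first zig-zag type, so for every $j\in[(k-1)/2]$ we have $\head(a_{2j-1})=\head(a_{2j})$, a vertex of indegree at least $2$ and hence a reticulation whose only outgoing arc is $c_{2j-1}$, and $\tail(a_{2j})=\tail(a_{2j+1})$, a tree vertex whose two outgoing arcs are exactly $a_{2j}$ and $a_{2j+1}$.

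The structural claim I would establish is: if $e$ is the unique outgoing arc of a reticulation $r$, then in every cherry cover of a binary network $e$ occurs as the $p_x x$ arc of a reticulated cherry shape with $p_x=r$. Indeed, the internal node of a cherry shape, as well as the tails of the middle arc and of the $p_y y$ arc of a reticulated cherry shape, are all tree vertices of outdegree $2$, whereas $\tail(e)=r$ is a reticulation. Applying this to $c_{2j-1}$, the shape covering $c_{2j-1}$ has $p_x=\head(a_{2j-1})$, and its middle arc is one of the two incoming arcs of $\head(a_{2j-1})$, namely $a_{2j-1}$ or $a_{2j}$.

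For the base case $j=1$, the tail of $a_1$ is a reticulation, so $a_1$ cannot be a middle arc; therefore the middle arc is $a_2$, and the $p_y y$ arc is forced to be the other outgoing arc of $\tail(a_2)=\tail(a_3)$, which is $a_3$, yielding $\{c_1,a_2,a_3\}$. For the inductive step, assume the shape containing $c_{2j-3}$ is $\{c_{2j-3},a_{2j-2},a_{2j-1}\}$; then $a_{2j-1}$ is already covered. Since a cherry cover partitions the arc set, $a_{2j-1}$ cannot reappear as the middle arc of the shape covering $c_{2j-1}$, so the middle arc is $a_{2j}$ and the $p_y y$ arc is $a_{2j+1}$, giving $\{c_{2j-1},a_{2j},a_{2j+1}\}$.

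The only mildly delicate point is the structural claim itself, which rests on a careful case analysis of the three non-$p_x x$ roles an arc can occupy in a cherry or reticulated cherry shape; after that, the induction is essentially mechanical, with the partition property eliminating the only alternative middle arc at each step.
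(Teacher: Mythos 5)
Your proof is correct: the structural claim (the unique outgoing arc of a reticulation can only be the $p_xx$ arc of a reticulated cherry shape) holds by the degree case analysis you give, the base case correctly rules out $a_1$ as a middle arc because its tail is a reticulation, and the partition property drives the induction. The paper states this lemma without proof, citing Lemma 13 of an external reference, but your argument is essentially the same one the paper uses for the semi-binary generalization (Lemma~\ref{lem:genN}): force the bottom arc at each reticulation, exclude $a_1$ as a middle arc, and propagate the covering down the fence.
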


Note that the arcs of the reticulated cherry shapes do not exactly coincide with the arcs of~$\hat{N}$ in Lemma~\ref{lem:n_fence_unique}.
In particular, the arc~$a_1$ of~$\hat{N}$ is not covered; we shall see later that~$a_1$ will be covered as an outgoing arc of a reticulation in another fence.
Also, observe that the reticulated cherry shapes cover the arcs~$c_{2j-1}$, which are not in~$\hat{N}$. 
Such arcs are contained as first arcs in other $N$-fences. 
They will not be covered in other reticulated cherry shapes, similarly to how~$a_1$ was not covered by the reticulated cherry shapes corresponding to~$\hat{N}$.

We shall show similar results for $M$-fences and crowns. Let $\hat{M}$ be an $M$-fence of~$N$. 
In what follows, we shall write~$\hat{M} := (a_1,a_2,\ldots, a_{k})$, and we will let~$c_{2j}$ denote the arc connecting~$\head(a_{2j})$ and its child for~$j\in\left[\frac{k-2}{2}\right]$.

\begin{lem}\label{lem:M-fenceSizeCC}
    Let~$N$ be a binary tree-based network, and let~$\hat{M}$ be an $M$-fence of length~$k\ge 4$. 
    There exists $k/2$ distinct ways of covering the arcs of~$\hat{M}$ and the arcs~$c_{2j}$ for~$j\in\left[\frac{k-2}{2}\right]$ using cherry and reticulated cherry shapes.
\end{lem}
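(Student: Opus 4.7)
My plan is to parameterize valid coverings by an integer $\tau\in\{0,1,\ldots,\frac{k-2}{2}\}$, giving exactly $k/2$ options in total. Write $r_j := \head(a_{2j}) = \head(a_{2j+1})$ for the reticulations of the $M$-fence and $t_i := \tail(a_{2i-1}) = \tail(a_{2i})$ for its tree vertices.

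The first step is to observe that each $c_{2j}$ must lie in a reticulated cherry shape with $p_x=r_j$, because $r_j$ has outdegree $1$ and so $c_{2j}$ cannot sit in a cherry shape; moreover $r_j$ cannot play any role other than $p_x$, since $p_y$ is required to be a tree vertex. The middle arc of that shape is then one of the two incoming arcs of $r_j$, namely $a_{2j}$ or $a_{2j+1}$; encode the choice by $s_j\in\{L,R\}$, with $L$ meaning middle $a_{2j}$ (so $p_y=t_j$ and the $p_y y$ arc is $a_{2j-1}$) and $R$ meaning middle $a_{2j+1}$ (so $p_y=t_{j+1}$ and the $p_y y$ arc is $a_{2j+2}$). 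In either case all three arcs of the shape lie in $\hat{M}\cup\{c_{2j}\}$.

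The second step is to extract the constraint on $(s_1,\ldots,s_{\frac{k-2}{2}})$. A tree vertex $t_i$ can serve as $p_y$ in at most one reticulated cherry shape, because both of its outgoing arcs are then consumed; but $s_{i-1}=R$ assigns $t_i$ as $p_y$ for $r_{i-1}$, while $s_i=L$ assigns $t_i$ as $p_y$ for $r_i$, so these cannot co-occur. Hence $R$ propagates rightwards and the admissible sequences are exactly those of the form $L^\tau R^{\frac{k-2}{2}-\tau}$, giving $\frac{k-2}{2}+1=\frac{k}{2}$ choices. Each admissible sequence determines a unique cover: the prescribed reticulated cherry shapes together with a single ``closing'' cherry $\{a_{2\tau+1},a_{2\tau+2}\}$ based at the unused pivot tree vertex $t_{\tau+1}$. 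A routine index check confirms that every arc of $\hat{M}\cup\{c_2,\ldots,c_{k-2}\}$ is covered exactly once, and distinct values of $\tau$ yield distinct covers, distinguished by the position of the closing cherry.

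The main subtlety I expect to navigate is the boundary behaviour at $t_1$ and $t_{k/2}$: since $\hat{M}$ is maximal and not a crown, $\head(a_1)$ and $\head(a_k)$ are not reticulations, so these two tree vertices each have only one reticulation neighbour. I need to verify that this asymmetry does not restrict $s_1$ or $s_{\frac{k-2}{2}}$, i.e.\ that the extreme cases $\tau=0$ and $\tau=\frac{k-2}{2}$ are also realised. Concretely, when $s_1=R$ the vertex $t_1$ is simply unused as $p_y$ and the arcs $a_1,a_2$ are covered together by the cherry shape $\{a_1,a_2\}$; the symmetric statement holds at the other end. Once this boundary verification is in place, the $k/2$ parameter values correspond bijectively to the valid coverings.
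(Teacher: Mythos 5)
Your proof is correct and takes essentially the same approach as the paper's: both rest on the observation that each outgoing arc $c_{2j}$ forces a reticulated cherry shape at the reticulation $r_j$ with middle arc $a_{2j}$ or $a_{2j+1}$, and that the conflict at each tree vertex leaves exactly one adjacent pair $\{a_m,a_{m+1}\}$ to be covered as a cherry shape, whose position determines the cover. Your $L^\tau R^{(k-2)/2-\tau}$ parameterization is just a different bookkeeping of the paper's ``location of the unique cherry shape,'' and is if anything slightly more explicit about realizability of all $k/2$ choices.
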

\begin{pf}
    In any cherry cover of~$N$, exactly one incoming arc of every reticulation is covered as a middle arc of a reticulated cherry shape. 
    Since the $M$-fence contains $k/2-1$ reticulations, we must have exactly $k/2-1$ reticulated cherry shapes that cover the arcs of~$\hat{M}$ and the arcs $c_{2i}$.
    By construction, the remaining two arcs of~$\hat{M}$ must be arcs~$a_m$ and~$a_n$ for some $m,n\in[k]$.
    Suppose without loss of generality that~$n>m$. 
    We show that~$n = m+1$.
    If not, then consider the arc~$a$ that is different from $a_m$ in~$\hat{M}$ where~$\tail(a) = \tail(a_m)$.
    Note that~$a=a_{m-1}$ or~$a=a_{m+1}.$ By assumption,~$a$ is covered by a reticulated cherry shape. 
    But since~$\tail(a)$ is a tree vertex, and since~$a$ is covered as the middle arc of a reticulated cherry shape,~$a$ can only be covered in a shape together with~$a_m$.
    This gives the required contradiction, since~$a_m$ is yet to be covered.
    Thus~$n=m+1$, and we may cover these remaining arcs as a cherry shape, $\{a_m,a_{m+1}\}$.

    Therefore every $M$-fence of length~$k\ge4$ must be covered by a single cherry shape and~$k/2-1$ reticulated cherry shapes in every cherry cover.
    Note that the choice of arcs in the cherry shape determines the cherry cover for the $M$-fence subgraph.
    We conclude that there are $k/2$ distinct ways of covering the arcs of~$\hat{M}$ and the arcs~$c_{2j}$ for~$j\in\left[\frac{k-2}{2}\right]$ using cherry and reticulated cherry shapes.
\end{pf}

Finally let $\hat{C}$ be a crown of~$N$. 
In what follows, we shall write~$\hat{C} := (a_1,a_2,\ldots, a_{k})$, and we will let~$c_{2j}$ denote the arc connecting~$\head(a_{2j})$ and its child for~$j\in\left[\frac{k}{2}\right]$. 

\begin{lem}\label{lem:CrownSizeCC}
    Let~$N$ be a binary tree-based network, and let~$\hat{C}$ be a crown of length~$k\ge 4$. There exists exactly 2 distinct ways of covering the arcs of~$\hat{C}$ and the arcs~$c_{2j}$ for~$j\in\left[\frac{k}{2}\right]$  using reticulated cherry shapes.
\end{lem}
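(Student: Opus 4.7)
\medskip

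\noindent\textbf{Proof proposal.} The plan is to mirror the structure of the proofs of \Cref{lem:n_fence_unique} and \Cref{lem:M-fenceSizeCC}, exploiting the fact that a crown is cyclic and that all its vertices alternate between tree vertices and reticulations. Writing $\hat{C} = (a_1,\ldots,a_k)$ with $\head(a_1) = \head(a_k)$, the pattern forces $r_j := \head(a_{2j-1}) = \head(a_{2j})$ to be a reticulation and $t_j := \tail(a_{2j}) = \tail(a_{2j+1})$ to be a tree vertex for every $j \in [k/2]$, where indices on arcs are read cyclically modulo $k$. In particular, $\hat{C}$ contains exactly $k/2$ reticulations and every incoming arc of every $r_j$ lies in $\hat{C}$.

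First I would observe, as in \Cref{lem:M-fenceSizeCC}, that in any cherry cover exactly one of the two incoming arcs of each $r_j$ must appear as the middle arc of a reticulated cherry shape. Suppose the chosen middle arc for $r_j$ is $a_{2j}$. Since $\tail(a_{2j}) = t_j$ is a tree vertex of out-degree two, the only reticulated cherry shape in which $a_{2j}$ can play the role of middle arc is forced to be $\{c_{2j},\, a_{2j},\, a_{2j+1}\}$, because the other outgoing arc of $t_j$ is $a_{2j+1}$ and the outgoing arc of $r_j$ is $c_{2j}$. Symmetrically, if the chosen middle arc is $a_{2j-1}$, the shape must be $\{c_{2j},\, a_{2j-1},\, a_{2j-2}\}$. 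In either case the shape covers both arcs of $\hat{C}$ incident to the relevant tree vertex, together with~$c_{2j}$.

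Next I would show that these choices are globally coupled around the cycle. Assume $a_{2j}$ is chosen as the middle arc for $r_j$; then $a_{2j+1}$ is already used (as the ``terminal'' arc $p_y y$) in the shape of $r_j$, so $a_{2j+1}$ cannot play the middle-arc role in the shape of $r_{j+1}$, forcing $a_{2j+2}$ to be the middle arc for $r_{j+1}$. Propagating this implication around the cyclic index set shows that the whole covering is determined by the single choice at $r_1$: either all even-indexed arcs $a_2,a_4,\ldots,a_k$ serve as middle arcs (giving the $k/2$ shapes $\{c_{2j}, a_{2j}, a_{2j+1}\}$), or all odd-indexed arcs $a_1,a_3,\ldots,a_{k-1}$ do (giving the $k/2$ shapes $\{c_{2j}, a_{2j-1}, a_{2j-2}\}$). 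Finally, I would check that each of the two resulting families is indeed a valid covering of the prescribed arc set by reticulated cherry shapes, which is immediate since each family consists of $k/2$ pairwise arc-disjoint shapes together covering all $k$ arcs of $\hat{C}$ and all $k/2$ arcs $c_{2j}$, with no need for cherry shapes.

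The main obstacle will be book-keeping the cyclic indices cleanly, especially to argue that the two constructions are genuinely distinct and that no ``mixed'' assignment can be consistent. The cyclic propagation argument is where a naive induction (as used for $M$-fences) has to be replaced by a parity argument around a cycle; since $k$ is even, the two fixed points of this propagation correspond precisely to the two parities, which is what yields exactly $2$ rather than $1$ or more covers.
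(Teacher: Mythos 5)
Your proof is correct and follows essentially the same route as the paper: both rest on the fact that exactly one incoming arc of each reticulation of the crown must be covered as the middle arc of a reticulated cherry shape, and both arrive at the same two explicit covers distinguished by parity. If anything, your cyclic propagation argument makes explicit the final step that the paper merely asserts (that no other ways of using reticulated cherry shapes to cover the $3k/2$ arcs exist), the paper instead ruling out cherry shapes by an arc count ($3k/2$ arcs versus $k/2$ shapes of three arcs each); the only blemish is a harmless indexing slip, since with your convention $r_j=\head(a_{2j-1})=\head(a_{2j})$ the crown closes at the tails, i.e.\ $\tail(a_1)=\tail(a_k)$ rather than $\head(a_1)=\head(a_k)$.
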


\begin{pf}
    As in the proofs of the previous two lemmas, we use the fact that exactly one incoming arc of every reticulation is covered as a middle arc of a reticulated cherry shape.
    Each reticulated cherry shape consists of 3 arcs. 
    The crown, together with the arcs~$c_{2j}$, has exactly $3k/2$ arcs in total; the crown contains exactly $k/2$ reticulations.
    It follows that every arc of~$\hat{C}$ and the arcs $c_{2j}$ for each $j$ must be covered by a reticulated cherry shape. 
    There are two ways to do this, by considering either of the covers
 \[\{c_{2j},a_{2j+1}, a_{2j+2}: j\in \left[\frac{k-2}{2} \right]\} \cup \{c_k, a_1, a_2\}\]

    or
    \[\{\{c_{2j},a_{2j}, a_{2j-1}\}: j\in[k/2]\}.\]

    No other ways of using reticulated cherry shapes to cover the $3k/2$ arcs exist.
    Therefore, there are exactly 2 distinct ways of covering the arcs of~$\hat{C}$ and the arcs~$c_{2j}$ for~$j\in\left[\frac{k}{2}\right]$.
\end{pf}

\begin{thm}\label{thm:NumberOfCherryCovers}
    Let~$N$ be a binary tree-based network. 
    Let~$\hat{M_1},\hat{M_2},\ldots, \hat{M_m}$ denote the $M$-fences of~$N$ of length $k_i\ge 4$ where~$i\in[m]$ and let~$\hat{C_1},\hat{C_2},\ldots,\hat{C_c}$ denote the crowns of~$N$.
    Then~$N$ has $2^{c-m} k_1 k_2 \cdots k_m$ distinct cherry covers and support trees.
\end{thm}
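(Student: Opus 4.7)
The plan is to assemble the global count of cherry covers by decomposing a cherry cover of $N$ into independent local covers, one per fence of its unique decomposition, and then invoke Proposition~\ref{prop:CC=ST-SB} to transfer the count to support trees. First I would fix the unique fence decomposition of $N$ via Theorem~\ref{thm:UniqueMaxZigZag} and exclude $W$-fences via Lemma~\ref{lem:TB=NoW}, so that the decomposition consists only of crowns, $M$-fences, and $N$-fences. To each fence $\hat{F}$ I associate an extended arc set $A(\hat{F})$: for an $N$-fence $(a_1,\ldots,a_k)$ with $k\ge 3$, the arcs $\{a_2,\ldots,a_k\}\cup\{c_1,c_3,\ldots,c_{k-2}\}$; for an $N$-fence of length $1$, the empty set; for an $M$-fence of length $k\ge 4$, the arcs $\{a_1,\ldots,a_k\}\cup\{c_2,c_4,\ldots,c_{k-2}\}$; for a length-$2$ $M$-fence, just $\{a_1,a_2\}$; and for a crown, $\{a_1,\ldots,a_k\}\cup\{c_2,c_4,\ldots,c_k\}$.

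The main bookkeeping step is to show that the $A(\hat{F})$ partition all non-root arcs of $N$. Arcs of any fence belong to exactly one $A(\hat{F})$ by the fence decomposition, with the sole exception of the first arc $a_1$ of each $N$-fence, which is an outgoing arc of a reticulation and hence missing from its own $N$-fence's extended arc set. Because $N$ has no $W$-fences, every outgoing arc of a reticulation is the first arc of a unique $N$-fence, and the reticulation's two incoming arcs lie in a single other fence $\hat{F}'$; that first arc then appears precisely as a $c$-arc in $A(\hat{F}')$. This matching gives an exact partition, and it follows that every cherry cover of $N$ restricts to a valid local cover on each $A(\hat{F})$ and, conversely, that any tuple of valid local covers reassembles into a unique global cherry cover.

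Applying the three lemmas, each $N$-fence contributes the factor $1$ (Lemma~\ref{lem:n_fence_unique}, with length-$1$ $N$-fences contributing the empty cover); each length-$2$ $M$-fence contributes $1$ (the only option is the cherry shape on its two arcs, since maximality forces its outer heads to be non-reticulations); each $M$-fence of length $k_i\ge 4$ contributes $k_i/2$ (Lemma~\ref{lem:M-fenceSizeCC}); and each crown contributes $2$ (Lemma~\ref{lem:CrownSizeCC}). Multiplying the local counts gives
\[
|\mathcal{P}_N| \;=\; 2^{c}\cdot\prod_{i=1}^{m}\frac{k_i}{2} \;=\; 2^{c-m}\, k_1 k_2 \cdots k_m,
\]
and Proposition~\ref{prop:CC=ST-SB} then yields $|\mathcal{S}_N|=|\mathcal{P}_N|$, as claimed. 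The principal obstacle is making the partition and independence arguments fully rigorous: in particular, carefully matching the first arc of each $N$-fence with the unique $c$-arc of another fence that absorbs it, and confirming that local choices combine without creating overlaps or leaving arcs uncovered. This relies on the structural observation that in a binary tree-based network, every reticulation lies in a unique fence containing both of its incoming arcs, and its outgoing arc is necessarily the first arc of some $N$-fence.
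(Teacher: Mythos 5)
Your proposal is correct and follows essentially the same route as the paper: partition the non-root arcs by the unique fence decomposition (with each fence absorbing the outgoing arcs of its reticulations, which is exactly how the first arcs of $N$-fences get accounted for), apply Lemmas~\ref{lem:n_fence_unique}, \ref{lem:M-fenceSizeCC} and \ref{lem:CrownSizeCC} to count the independent local covers, multiply, and transfer to support trees via Proposition~\ref{prop:CC=ST-SB}. The only difference is that you make the arc-partition bookkeeping explicit where the paper states it more informally.
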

\begin{pf}
    We can partition every cherry cover so that each block corresponds to a maximum zig-zag trail, together with outgoing arcs of the reticulations of the trail. 
    This follows from the fact that every arc with a reticulation tail must be covered by a reticulated cherry shape.
    By Lemmas~\ref{lem:n_fence_unique}, \ref{lem:M-fenceSizeCC}, and~\ref{lem:CrownSizeCC}, we know that each $N$-fence of length at least 3, $M$-fence of length~$k$ at least 4, and crown (together with outgoing arcs of reticulations in each zig-zag trail) can be covered in 1, $k/2$, and~$2$ ways. 
    The permutations are counted locally within each zig-zag trail, and they have no effect on one another.
    It follows that there are 
    \[2^c \times {\displaystyle \prod_{i=1}^m \frac{k_i}{2}} = 2^{c-m}k_1k_2\cdots k_m\]
    many possible cherry covers.
    By \Cref{prop:CC=ST-SB}, this is also the number of support trees for~$N$.
\end{pf}

\section{Extending fences to non-binary networks}
\label{sec:extension}


The goal of this section is to generalize the existence of a unique fence decomposition from binary networks (Theorem \ref{thm:UniqueMaxZigZag}) to non-binary networks.  

We start by extending the definitions of fences and crowns, introduced in Subsection \ref{subsec:bindecomp}, to non-binary networks. 
In extending crowns, we shall glue together a crown to a $W$-fence, an $N$-fence, or to another crown.
Roughly speaking, one can view non-binary crowns as structures that contain a binary crown.
In particular, a binary crown will also be considered to be a non-binary crown.
For the $W,N,M$-fences, we impose that they do not contain any binary crowns. 
By the properties of binary networks, it is clear that we cannot have a zig-zag trail containing a crown unless the whole trail forms a crown. Thus, the extra condition of acyclic paths over the $W,N,M$-fences for the non-binary case is not a more restrictive condition on the binary case. 

Let $N$ be non-binary; we introduce five more zig-zag trails. Let~$ (a_1,a_2,\ldots,a_k,c_1,c_2,\ldots,c_{2k'})$ be a zig-zag trail, of different arcs, where $(c_1,\ldots,c_{2k'})$ is a crown. We say that such a path is a

\begin{itemize}
    \item[$\bullet$] \emph{Lower (Upper) $W$-crown}: $k\geq 1$ is odd ($k\geq 2$ is even), $(a_1,\ldots,a_k)$ does not contain a crown, $head(a_k) = head(c_1)$ ($tail(a_k) = tail(c_1)$) and $tail(a_1)$ is a reticulation.
    \item[$\bullet$] \emph{Lower (Upper) $N$-crown}: $k\geq 2$ is even ($k\geq 1$ is odd), $(a_1,\ldots,a_k)$ does not contain a crown, $head(a_k) = head(c_1)$ ($tail(a_k) = tail(c_1)$) and $head(a_1)$ is a tree node.
    \item[$\bullet$] \emph{Double-crown:} There exists a $j<k$ such that $(a_1,a_2,\ldots,a_j)$ forms a crown and $(a_{j+1},\ldots,a_k)$ does not contain a crown.
\end{itemize}

\input{NBStructures}

As in the binary case, we define endpoints of the new structures. 
Formally, if $A$ is an upper/lower $W$-crown, then $End(A) = \{\tail(a_1)\}$, and if $A$ is an upper/lower $N$-crown, then $End(A) = \{\head(a_1)\}$. 
Figure \ref{fig:nbstruct} shows some examples of all the different types of zig-zag trails that we can find in non-binary networks.

We now extend the notion of zig-zag trails to non-binary networks. We use maximum zig-zag trails as defined in Section \ref{subsec:bindecomp}; the only difference between maximum zig-zag trails in binary and those in non-binary networks
is that two maximum zig-zag trails can overlap at arcs in the latter.
Let~$N$ be a non-binary network, and let~$A,B$ be two maximum zig-zag trails of~$N$.
We say that~$A$ and~$B$ \emph{intersect} if they share at least one common arc. Let~$\mathcal{F}(N)$ denote the set of all maximum zig-zag trails of~$N$. 
We call a subset~$F$ of~$\mathcal{F}(N)$ an \emph{extended zig-zag trail} if for any element~$A\in F$, either~$F = \{A\}$, or for any other element~$B\in F$, there exists a sequence~$A = A_1,\ldots, A_k = B$ of zig-zag trails in $F$, such that~$A_i$ intersects with~$A_{i+1}$ for~$i\in[k-1]$.
Such a set~$F$ is called \emph{maximal} if there is no subset~$G$ of~$\mathcal{F}(N)\setminus F$ where~$F\subseteq G$.
We call the set of maximal extended zig-zag trails an \emph{extended fence decomposition}.

\input{Extfence}

Let $F$ be an extended zig-zag trail. We denote by $V(F)$ and $A(F)$ the set of vertices and arcs of $F$, respectively. By considering an extended fence decomposition $D$ of the network, it is clear that each arc must be covered by exactly one maximal extended zig-zag trail $F\in D$. Let $G\subseteq F$. 
We say that $G$ \emph{covers} $F$, or that $G$ \emph{is a covering} of $F$, if $A(G) = A(F)$. Figure \ref{fig.6} shows an example of an extended zig-zag trail and one possible covering.

Additionally, let $F= \{A_1,\ldots,A_k,C_1,\ldots,C_\ell\}$, where $C_i$ represents all the crowns and double-crowns in $F$, and $A_i$ represent the other trails. We denote by $End(F)$ the set of endpoints of each $A_i$. We call this set the set of  \emph{endpoints} of $F$. 

\begin{prop}
    \label{prop:extended_decomp}
    Any non-binary network~$N$ has a unique extended fence decomposition.
\end{prop}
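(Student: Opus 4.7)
The plan is to reduce the claim to the well-known uniqueness of connected components of a graph. I would first introduce the \emph{intersection graph} $H$ whose vertex set is $\mathcal{F}(N)$ and whose edges join pairs of maximum zig-zag trails that share at least one arc of $N$. Since maximum zig-zag trails are defined by an intrinsic maximality condition on sequences of arcs (inherited from Section~\ref{subsec:bindecomp} and unchanged in the non-binary setting aside from allowing overlap), the set $\mathcal{F}(N)$, and hence the graph $H$, is unambiguously determined by $N$.

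Next I would translate the definition of extended zig-zag trails into $H$-language: the condition that any two elements of $F$ be linked by a sequence of pairwise-intersecting elements of $F$ is exactly connectivity of the subgraph of $H$ induced on $F$. Hence extended zig-zag trails correspond bijectively to vertex sets of connected induced subgraphs of $H$, and maximal extended zig-zag trails (in the natural reading of the maximality definition, i.e. not strictly contained in any other extended zig-zag trail) correspond to the connected components of $H$. Because the connected components of any graph form a unique partition of its vertex set, the set of maximal extended zig-zag trails, and therefore the extended fence decomposition, is uniquely determined by~$N$.

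To conclude that the decomposition genuinely covers every non-root arc and covers each such arc by a \emph{single} maximal extended zig-zag trail, I would observe that every non-root arc $a$ lies in at least one maximum zig-zag trail (start from the singleton sequence $(a)$, which trivially satisfies the zig-zag alternation condition, and extend greedily; finiteness of $N$ guarantees termination at a maximum zig-zag trail). Any two maximum zig-zag trails containing $a$ are adjacent in $H$ by construction, hence lie in the same component and therefore in the same maximal extended zig-zag trail. The main point requiring care, rather than a serious obstacle, is checking that the greedy extension is well-defined given the enlarged catalogue of fence and crown types introduced in this section, so that the resulting maximum zig-zag trail is indeed an element of $\mathcal{F}(N)$; once this bookkeeping is done, uniqueness follows from the standard uniqueness of connected components.
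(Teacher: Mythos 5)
Your proposal is correct and is essentially the paper's own argument made explicit: the paper's proof simply observes that every arc lies in some maximal zig-zag trail and that all maximal zig-zag trails through a given arc land in the same maximal extended zig-zag trail by definition, which is precisely your connected-components-of-the-intersection-graph formulation. The extra care you take (existence via greedy extension, and the reading of the slightly garbled maximality condition as ``not properly contained in another extended zig-zag trail'') is a faithful, more detailed rendering of the same idea rather than a different route.
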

\begin{proof}
    Every arc of~$N$ is covered by a trail in exactly one maximal extended zig-zag trail. Indeed, every maximal zig-zag trail that contains a certain arc is placed in the same  maximal extended zig-zag trail, by definition.
\end{proof}

By construction, we have that extended zig-zag trails are connected, so we have zig-zag trails connecting any two nodes of the trail. The following result states that we do not only have paths between any two nodes of the fence, but we have zig-zag trails, which can be extended to maximal zig-zag trails.

\begin{lem}\label{lem:endp}  
    Let $N$ be a non-binary network and let $F$ be a maximal extended zig-zag trail in $N$ and $u,v\in V(F)$. Then there exists a zig-zag trail $C$ (not necessarily maximal) with all its arcs in $A(F)$, such that $End(C) = \{u,v\}$.
\end{lem}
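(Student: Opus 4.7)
The plan is to prove the result by induction on the cardinality of $F$. It will be convenient to establish the slightly stronger statement for any (not necessarily maximal) extended zig-zag trail, since the inductive step will produce extended zig-zag trails of smaller size that need not be maximal. The base case corresponds to $F = \{A\}$ consisting of a single maximal zig-zag trail $A = (a_1, \ldots, a_n)$, while the inductive step with $|F| \ge 2$ chains through a shared arc of two constituent maximal zig-zag trails.

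For the base case, each vertex of $A$ has a canonical position in $\{0, 1, \ldots, n\}$: positions $0$ and $n$ correspond to the free endpoints of $a_1$ and $a_n$, while position $i$ for $1 \le i \le n-1$ corresponds to the vertex shared by $a_i$ and $a_{i+1}$. Given $u, v \in V(A)$ at positions $p_u < p_v$, the contiguous subsequence $(a_{p_u+1}, \ldots, a_{p_v})$ is itself a zig-zag trail (inheriting the alternation pattern from $A$) whose set of endpoints equals $\{u, v\}$. For the inductive step, the intersection-connectivity of $F$ yields a shortest chain $B_0, B_1, \ldots, B_s$ of maximal zig-zag trails in $F$ with $u \in V(B_0)$, $v \in V(B_s)$, and $A(B_i) \cap A(B_{i+1}) \neq \emptyset$ for every $i$. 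Picking a shared arc $a \in A(B_0) \cap A(B_1)$ and one of its endpoints $w$, the base case applied to $\{B_0\}$ produces a zig-zag trail $C_1 \subseteq B_0$ with $End(C_1) = \{u, w\}$, while the inductive hypothesis applied to the smaller extended zig-zag trail $\{B_1, \ldots, B_s\}$ produces a zig-zag trail $C_2$ with $End(C_2) = \{w, v\}$ and arcs in $A(B_1) \cup \cdots \cup A(B_s)$. The desired $C$ is then obtained by concatenating $C_1$ and $C_2$ at $w$.

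The hard part will be ensuring that this concatenation is genuinely a zig-zag trail: at the junction $w$, the last arc of $C_1$ and the first arc of $C_2$ must meet with a common head or a common tail at $w$, and this common-side choice at $w$ must be compatible with the alternation patterns of $C_1$ and $C_2$ on either side. The resolution will exploit the freedom to choose $w$ between the two endpoints of $a$, to choose among multiple shared arcs of $B_0$ and $B_1$ when there is more than one, and the symmetry that reversing a zig-zag trail yields another zig-zag trail. The structural reason this always works is the degree constraint in the underlying network: a reticulation has at least two incoming arcs and a tree vertex has at least two outgoing arcs, so there is always room at $w$ to use distinct arcs on the two sides of the junction.
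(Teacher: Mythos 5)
Your proposal follows essentially the same route as the paper: both exploit the defining chain of pairwise-intersecting maximal zig-zag trails and splice sub-trails together along that chain before extracting the piece between $u$ and $v$; the paper iterates along the chain rather than inducting on $|F|$, and it splices \emph{through} the shared arc itself (a prefix of the current trail up to the shared arc followed by the remainder of the next trail after that arc), which slightly reduces the junction bookkeeping you single out as the hard part. The alternation-compatibility at each junction is handled in the paper only by the words ``otherwise, rearrange the trails,'' so your deferred resolution is at the same level of detail as the published argument, though note that the operative freedom is the choice of splicing point and of traversal direction rather than the vertex-degree bounds you cite at the end.
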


\begin{pf}
Let $A$ and $B$ be two maximal zig-zag trails in $F$ such that $u$ is a node of $A$ and $v$ is a node of $B$. By construction of $F$, there exists a sequence $A = A_1,A_2,\ldots,A_k = B$ of maximal zig-zag trails of $F$, with $k\geq 2$, such that $A_i$ and $A_{i+1}$ are intersecting for every $i$. 
We are going to construct a maximal zig-zag trail with arcs in $A(F)$, including $u$ and $v$.
Let $d_i$ denote one of the arcs in the intersection of $A_i$ and $A_{i+1}$. Now, we define maximal zig-zag trails $C_i$ as follows:

\begin{itemize}
        \item[(1)] $C_1 = A_1$.
        \item[(2)] For $1<i <k$:
        \begin{itemize}
            \item[-] Denote $C_{i-1} = (c^{i-1}_1,\ldots,c^{i-1}_{m_{i-1}})$ and $A_{i} = (a^i_1,\ldots,a^i_{n_i})$ and let $j\in [m_{i-1}]$ such that $c_j^{i-1} = a_l^i$ for some $l\in [n_i]$. 
            Suppose that $u\in \{\tail(c_h^{i-1}),\head(c_h^{i-1})\}_{h=1}^j$ and $d_i\in\{a_{h}^i\}_{h=l}^{n_i}$, otherwise, rearrange the trails to be in this situation. Then, we define            $$C_i = 
            (c_1^{i-1},\ldots,c_j^{i-1},a_{l+1}^i,a_{l+2}^i,\ldots, a_{n_i}^i).$$
        \end{itemize}
                \item[(3)] Denote $C_{k-1} = (c^{k-1}_1,\ldots,c^{k-1}_{m_{k-1}})$. We have
                $d_{k-1}\in A(C_{k-1})\cap A(A_k)$,
                so let us denote $d_k = c_{p}^{k-1} = a_{q}^k$. Assume that $u\in \{\tail(c_h^{k-1}),\head(c_h^{k-1})\}_{h=1}^p$ and $v\in \{\tail(a_h^{k}),\head(a_h^{k})\}_{h=q}^{n_k}$, otherwise, rearrange the trails. Thus, define 

        $$C_k = \{c_1^{k-1},\ldots,c_p^{k-1},a_{q+1}^k,\ldots,a_{n_k}^k)$$
\end{itemize}

Clearly, each $C_i$ is a maximal zig-zag trail containing $u$ that intersects $A_{i+1}$. Additionally, we can easily see the maximality of $C_i$ from the maximality of $C_{i-1}$ and $A_{i}$. Thus, $C_k$ is a maximal zig-zag trail containing $u$ and $v$ with all its arcs in $F$. From $C_k$, we can extract a zig-zag trail from $u$ to $v$ with all its arcs in $F$.
\end{pf}

It is worth noting that, from the construction given in the proof of Lemma \ref{lem:endp}, we can deduce that, if two disjoint fences $A$ and $B$ are present within the same extended zig-zag trail $F$, we can identify a new fence $C$ intersecting $A$ and $B$, with any two selected vertices from $V(A)$ and $V(B)$. Furthermore, if we select $u$ and $v$ to be endpoints of $F$, Lemma \ref{lem:endp} yields the following result. 

\begin{cor}\label{cor:endp}
    Let $N$ be a non-binary network and let $F$ be a maximal extended zig-zag trail in $N$ and $u,v\in End(F)$. Then, there exists $C\in F$ with $End(C) = \{u,v\}$.
\end{cor}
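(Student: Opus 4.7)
The plan is to refine the inductive construction from the proof of Lemma~\ref{lem:endp}, choosing the initial and final maximal zig-zag trails so that they witness $u$ and $v$ as endpoints, and then verifying that these endpoints are preserved throughout the construction.

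Since $u, v \in End(F)$, by the definition of $End(F)$ there exist non-crown maximal zig-zag trails $A_u, A_v \in F$ with $u \in End(A_u)$ and $v \in End(A_v)$. Because $F$ is an extended zig-zag trail, we may take a sequence $A_u = A_1, A_2, \ldots, A_k = A_v$ of maximal zig-zag trails in $F$ such that $A_i$ intersects $A_{i+1}$ for each $i$. Feeding this sequence into the construction of Lemma~\ref{lem:endp} yields maximal zig-zag trails $C_1, \ldots, C_k$ with $A(C_k) \subseteq A(F)$ and $u, v \in V(C_k)$.

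What remains beyond Lemma~\ref{lem:endp} is to upgrade "$u,v \in V(C_k)$" to "$u,v \in End(C_k)$". I would do this by induction on $i$. For the base case, choose the orientation of $A_1 = A_u$ so that $u$ sits at the initial extremal position of $C_1 = A_1$; this is possible because $u \in End(A_u)$ and, for any non-crown maximal zig-zag trail, its endpoints are the extremal tails or heads of its first and last arcs, so at most one reversal is needed. For the inductive step, the concatenation
\[
C_i = (c_1^{i-1}, \ldots, c_j^{i-1}, a_{l+1}^i, \ldots, a_{n_i}^i)
\]
keeps $c_1^{i-1}$ as its first arc, so the initial extremal vertex of $C_{i-1}$ is inherited by $C_i$, and $u$ remains at the initial extremal position. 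In step (3), rearrange the orientation of $A_k = A_v$ so that $v$ is the terminal extremal vertex of the appended suffix $a_{q+1}^k, \ldots, a_{n_k}^k$; this is possible because $v \in End(A_v)$. Consequently $End(C_k) = \{u, v\}$.

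Finally, $C_k$ is a maximal zig-zag trail whose arcs lie in $A(F)$ and which intersects $A_1 \in F$; by the maximality of the extended zig-zag trail $F$, this forces $C_k \in F$, completing the proof. The main technical obstacle is essentially bookkeeping: one must check that the orientation choices made in the Lemma's "rearrange the trails to be in this situation" step are flexible enough to keep $u$ pinned at the initial extremal position of each $C_i$ and $v$ pinned at the terminal extremal position of $C_k$. This reduces to the elementary observation that reversing a maximal zig-zag trail produces another maximal zig-zag trail with the same endpoint set, and a fence's endpoints always sit at its extremities.
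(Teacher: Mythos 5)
Your proposal is correct and follows essentially the same route as the paper, which derives the corollary directly from the construction in Lemma~\ref{lem:endp} by choosing $u$ and $v$ to be endpoints of $F$; you merely make explicit the orientation bookkeeping and the maximality argument for $C_k\in F$ that the paper leaves implicit.
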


\section{Characterization of semi-binary tree-based networks}
\label{subsec:CharacterizationSBTBN}

In this section, we extend the results of counting the number of cherry covers (and then of support trees), Theorem \ref{thm:NumberOfCherryCovers}, and the characterization of tree-based networks in terms of forbidden structures, Lemma \ref{lem:TB=NoW}, from binary to semi-binary tree-based networks. 


Proposition \ref{prop:extended_decomp} asserts the existence of a unique extended fence decomposition for any non-binary network. Consequently, this decomposition also applies to semi-binary networks.

Now, our focus is to study the forbidden structures in semi-binary tree-based networks with respect to their extended fence decomposition. Due to the semi-binary property, in particular since the outdegree of tree nodes is 2, it is clear that the network cannot exhibit upper $N$-crowns or upper $W$-crowns. Additionally, we prove the absence of $W$-fences, lower $W$-crowns and double-crowns.

\begin{lem}\label{lem:SBTB-forbidden}
        Let $N$ be a semi-binary tree-based network, and let $D$ be its extended fence decomposition. For all $F\in D$, $F$ satisfies the following conditions:
        
        \begin{itemize}
            \item[\textbf{a)}] $F$ does not contain any $W$-fences.
            \item[\textbf{b)}] $F$ does not contain any lower $W$-crowns.
            \item[\textbf{c)}] $F$ has at most $1$ crown. In particular, $F$ does not contain any double-crowns.
        \end{itemize}
        
\end{lem}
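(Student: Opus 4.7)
The plan is to derive every forbidden structure by contradicting the existence of a support tree. Since $N$ is tree-based, I fix a support tree $T$ obtained by keeping, at each reticulation, exactly one incoming arc while not creating new leaves. Two local forcing rules drive every case. First, (R1): if a reticulation $r$ has its unique outgoing arc $r \to v$, then $r \to v$ must be the arc retained at $v$ in $T$, or else $r$ becomes a new leaf. Second, (R2): since tree vertices have outdegree exactly $2$ in a semi-binary network, deleting one outgoing arc of a tree vertex $t$ forces the other outgoing arc of $t$ to be retained at its head, or else $t$ becomes a new leaf. Iterating (R1) and (R2) along consecutive arcs of a zig-zag trail alternately fixes arcs as retained or deleted.

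For \textbf{(a)}, in a $W$-fence $(a_1, \ldots, a_k)$ both $\tail(a_1)$ and $\tail(a_k)$ are reticulations whose unique outgoing arcs are $a_1$ and $a_k$, so (R1) forces both $a_1$ and $a_k$ retained. Starting from $a_1$: at $\head(a_1) = \head(a_2)$ the arc $a_2$ is deleted, and (R2) at $\tail(a_2) = \tail(a_3)$ forces $a_3$ retained; propagating in this way yields $a_{2i-1}$ retained and $a_{2i}$ deleted for every $i$. Since $k$ is even, $a_k$ is deleted, contradicting its forced retention.

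For \textbf{(b)}, in a lower $W$-crown $(a_1, \ldots, a_k, c_1, \ldots, c_{2k'})$ the same propagation along $(a_1, \ldots, a_k)$ forces $a_k$ retained at $v := \head(a_k)$, because $k$ is odd. Since the crown closes at $v$, that is $\head(c_1) = \head(c_{2k'}) = v$, the retention of $a_k$ deletes both $c_1$ and $c_{2k'}$. Two opposing propagations now run inside the crown: the deletion of $c_1$ forces via (R2) the rule ``retain $c_{2i-2}$ at $v_i$'' for $i = 2, \ldots, k'$, while the deletion of $c_{2k'}$ symmetrically forces ``retain $c_{2i-1}$ at $v_i$''. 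These two prescriptions demand different retained incoming arcs at the same reticulation $v_i$, contradicting the existence of $T$.

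For \textbf{(c)}, suppose $F$ contains two crowns $C_1$ and $C_2$. By the extended-fence structure they are linked within $F$, and Lemma~\ref{lem:endp} supplies a zig-zag trail $P$ inside $F$ running from a vertex of $C_1$ to a vertex of $C_2$. Because the tree vertices inside each crown already use both of their outgoing arcs within that crown, $P$ can attach to each crown only at a reticulation. Applying the (a)/(b)-style propagation along $P$ eventually forces an arc incident to some crown reticulation $v$ to be retained exactly as $a_k$ was forced retained in (b); this deletes both crown incoming arcs at $v$ and launches inside that crown the same pair of opposing propagations as in (b), producing the required contradiction. Double-crowns contain two crowns by definition and fall under this same argument. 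The main obstacle is the case analysis in (c): two crowns of $F$ can be linked in several distinct ways -- sharing arcs directly at a high-indegree reticulation, chained through $M$- or $N$-fences, through a lower $N$-crown, or through nested combinations of these -- and one must verify in each configuration that the propagation along $P$ indeed forces an extra retained incoming arc at a reticulation of one of the two crowns.
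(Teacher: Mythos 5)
Parts \textbf{a)} and \textbf{b)} of your proposal are correct and essentially identical in method to the paper's: the paper proves \textbf{a)} by exactly your alternating retained/deleted propagation seeded by the reticulation-to-reticulation arc $a_1$, and dismisses \textbf{b)} as ``analogous''; your explicit two-sided collision inside the crown is a valid way to finish \textbf{b)}.

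Part \textbf{c)} has a genuine gap, which you partly flag yourself. Your plan is to ``apply the (a)/(b)-style propagation along $P$'' to force a retained arc at a crown reticulation, but that propagation has no seed: unlike in \textbf{a)} and \textbf{b)}, where the chain starts from an arc joining two reticulations that rule (R1) forces into $T$, the connecting trail $P$ begins and ends at arcs whose \emph{heads} are crown reticulations and whose tails are ordinary tree nodes, so nothing on $P$ is forced a priori. The paper's proof supplies the missing step with a dichotomy at the first crown's attachment reticulation $\head(c_1)=\head(a_i)=\head(a_{i+1})$: either both crown arcs $a_i,a_{i+1}$ are deleted there (which immediately launches your two opposing propagations \emph{inside that crown} and gives the contradiction), or one of them is retained, which forces the first arc of $P$ to be deleted and only then seeds the propagation along $P$; since both ends of $P$ attach at reticulations, $P$ has even length, so its last arc is forced retained, deleting both arcs of the second crown at its attachment reticulation and producing the contradiction there. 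The paper also treats separately the case where the two crowns already share a reticulation (no connecting trail needed; the shared reticulation retains at most one incoming arc, so one crown has both of its arcs deleted there). Your closing paragraph defers exactly this case analysis to future work rather than carrying it out, so as written the proof of \textbf{c)} is incomplete; the fix is the retained-or-deleted dichotomy above rather than an enumeration of the ways two crowns can be linked.
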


\begin{proof}
Let us start by proving \textbf{a)}. Let $N$ be a semi-binary tree-based network, and suppose there exists a $W$-fence $(a_1,\ldots, a_{2l})$ in a maximal zig-zag trail $F\in D$. Since $N$ is tree-based, it has some underlying support tree $T$ as the result of iteratively deleting all but one incoming arc of each reticulation, ensuring that no new leaves are added at any step.
    
    Observe that $\tail(a_1)$ and $\head(a_1)$ are two connected reticulations, then  $a_1\in E(T)$. Consequently, $a_2\notin E(T)$. As $\tail(a_2)$ is a tree node with outdegree $2$, we have $a_3\in E(T)$; otherwise $\tail(a_2)$ corresponds to a (different) leaf on $T$. Repeating this reasoning, we conclude that $a_{2l-1}\in E(T)$ and $a_{2l}\notin E(T)$. This selection of arcs implies that the reticulation $\tail(a_{2l})$ corresponds to a (different) leaf in the support tree which is a contradiction.

An analogous constructive argument can be used to prove \textbf{b)}.

    Finally, let us show condition \textbf{c)}. Let $N$ be a semi-binary tree-based network, and suppose there exists an extended zig-zag trail $F\in D$ with more than $1$ crown. Let $(a_1,\ldots,a_n)$ and $(b_1,\ldots,b_{m})$ be two different crowns. Notice that $n$ and $m$ are even integers. Let us assume that $\tail(a_1)=\tail(a_n)$ and $\tail(b_1)=\tail(b_m)$, if not, we can rename the arcs to make sure that this is the case. We distinguish the cases where both paths either intersect in a reticulation or they do not. Let $T$ be a support tree of $N$.
    
    Firstly, assume they intersect in a reticulation. Then there exists  $i\in [n]$ and $j\in [m]$ such that $\head(a_i)=\head(a_{i+1}) = \head(b_{j}) = \head(b_{j+1})$. Observe that either $a_i\notin E(T)$ and $a_{i+1}\notin E(T)$, or $b_j\notin E(T)$ and $b_{j+1}\notin E(T)$. Without loss of generality, assume that $a_i\notin E(T)$ and $a_{i+1}\notin E(T)$. Since $a_{i}\notin E(T)$, by the semi-binary property, we have $a_{i-1}\in E(T)$, implying $a_{i-2}\notin E(T)$, and this process can be iterated.
    Repeating the same argument for $a_{i+1}\notin E(T)$, we get $a_{i+2}\in E(T)$, $a_{i+3} \notin E(T)$, etc.    
    Since $n$ is even, we end up having $a_1\notin E(T)$ and $a_n\notin E(T)$. However, due to $\tail(a_1)=\tail(a_n)$, this makes $\tail(a_1)$ a leaf of $T$, contradicting the fact that it is a support tree of $N$.
    
    Now, assume that we have two 
    binary crowns
    $(a_1,\ldots,a_n)$ and $(b_1,\ldots,b_m)$ that are disjoint in their reticulations. By Lemma \ref{lem:endp}, there exists a zig-zag trail $(c_1,\ldots,c_l)$ that joins both structures, such that $c_i \ne a_j$ and $c_i \ne b_k$ for any possible index combinations~$i,j,k$. 
    Due to the semi-binary condition, $\head(c_1)$ and $\head(c_l)$ are reticulations, then $l$ is an even number. By relabelling arcs, we can consider the zig-zag trail $(a_2,a_3,\ldots,a_n,a_1,c_1,\ldots,c_l,b_2,b_3,\ldots,b_m,b_1)$ where $\head(a_1) = \head(a_2) = \head(c_1)$, $\head(b_1) = \head(b_{2}) = \head(c_l)$, $\tail(a_1)=\tail(a_n)$ and $\tail(b_1)=\tail(b_m)$. 
    
    If $a_1\notin E(T)$ and $a_{2}\notin E(T)$, then applying the same argument as in the previous case we reach a contradiction. Therefore, we either have $a_1\in E(T)$ or $a_{2}\in E(T)$. In that case, $c_1\notin E(T)$, and since $l$ is even, $c_{l}\in E(T)$, implying $b_1\notin E(T)$ and $b_2\notin E(T)$. Again, as in the previous case this leads to a contradiction.

Finally, since (by definition) double-crowns contain two binary crowns, then $F$ does not contain those structures. 
\end{proof}

Lemma \ref{lem:SBTB-forbidden} restrict the possible structures that we can find in the fence decomposition of a tree-based network, so we can classify the maximal extended zig-zag trails that we can find in semi-binary tree-based networks.

\begin{defn}
    Let $N$ be a semi-binary tree-based network, $D$ its extended fence decomposition and let $F\in D$. We say that:

    \begin{itemize}
        \item[ - ] $F$ is a \emph{generalized $N$-fence} if $F$ does not have crowns and $End(F)$ contains exactly one reticulation.
        \item[ - ] $F$ is a \emph{generalized $M$-fence} if $F$ does not have crowns and $End(F)$ contains only tree vertices.
        \item[ - ] $F$ is a \emph{generalized crown} if $F$ has one crown and $End(F)$ is either empty or it only contains tree vertices.
    \end{itemize}
\end{defn}

A reinterpretation of Lemma \ref{lem:SBTB-forbidden}
shows that the previous definitions classify the maximal extended zig-zag trails that we can find in semi-binary tree-based networks.

\begin{cor}\label{cor:struct}
    Let $N$ be a semi-binary tree-based network, $D$ its extended fence decomposition and let $F\in D$. Then, one of the following is satisfied:

    \begin{itemize}
        \item[\textbf{(1)}] $F$ is a generalized $N$-fence.
        \item[\textbf{(2)}] $F$ is a generalized $M$-fence.
        \item[\textbf{(3)}] $F$ is a generalized crown.
    \end{itemize}
\end{cor}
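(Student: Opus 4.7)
The plan is to exhaust the possibilities for any $F \in D$ by combining \Cref{lem:SBTB-forbidden} with the semi-binary hypothesis, and then use \Cref{cor:endp} to control the reticulations appearing in $End(F)$.

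First I would observe that upper $W$-crowns and upper $N$-crowns cannot appear in any semi-binary network at all. Both require a vertex $\tail(a_k) = \tail(c_1)$ that must accommodate at least three outgoing arcs (the two outgoing crown arcs incident to this tree vertex of the crown cycle, plus $a_k$), contradicting outdegree 2. Combined with \Cref{lem:SBTB-forbidden}, which forbids $W$-fences, lower $W$-crowns, and double-crowns and caps the number of crowns in $F$ at one, the only maximal zig-zag trails that can appear in $F$ are $M$-fences, $N$-fences, crowns, and lower $N$-crowns.

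Case A (no crown in $F$): then $F$ is composed exclusively of $M$-fences and $N$-fences. I claim $End(F)$ contains at most one reticulation. Suppose for contradiction that $r_1, r_2 \in End(F)$ are distinct reticulations. By \Cref{cor:endp} there is a maximal zig-zag trail $C \in F$ with $End(C) = \{r_1, r_2\}$; any maximal zig-zag trail with two reticulation endpoints is either a $W$-fence or a lower $W$-crown, both forbidden by \Cref{lem:SBTB-forbidden}. Hence $F$ has zero reticulation endpoints (generalized $M$-fence) or exactly one (generalized $N$-fence).

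Case B (one crown in $F$): I claim $End(F)$ contains no reticulation, so that $F$ is a generalized crown. Suppose for contradiction that $r \in End(F)$ is a reticulation; by the classification above, $r$ must be the reticulation endpoint $\tail(a_1)$ of some $N$-fence $A \in F$. Since $A$ and the crown both lie in $F$, they are linked by a chain of pairwise-intersecting trails. Following the iterative construction used in the proof of \Cref{lem:endp}, one assembles a maximal zig-zag trail in $F$ that starts at $r$ (note that any zig-zag trail containing the unique outgoing arc $a_1$ of the reticulation $r$ must have $r$ as an endpoint, since $r$ has outdegree 1) and absorbs the arcs of the crown. This yields a lower $W$-crown, contradicting \Cref{lem:SBTB-forbidden}(b). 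Hence $End(F)$ contains only non-reticulations, and $F$ is a generalized crown.

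The main obstacle is Case B: one must verify that the merging via \Cref{lem:endp} genuinely produces a maximal zig-zag trail that incorporates a full crown (not just a portion of one) while still retaining $r$ as an endpoint. This requires tracking the head/tail alternations through each intersection in the chain, and ruling out, via the maximality of $A$ as an $N$-fence (which by definition contains no crown), the alternative in which the constructed trail deviates from the cycle before completing the crown.
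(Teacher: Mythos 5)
Your proposal is correct and follows essentially the same route as the paper: both invoke Lemma~\ref{lem:SBTB-forbidden}(c) to cap the number of crowns at one, then use Corollary~\ref{cor:endp} (via the construction in Lemma~\ref{lem:endp}) to show that two reticulation endpoints would force a forbidden $W$-fence, and that a reticulation endpoint coexisting with a crown would force a forbidden lower $W$-crown. The only difference is presentational: you flag explicitly the verification that the merged trail in your Case~B absorbs the full crown while keeping the reticulation as an endpoint, a point the paper dispatches with ``a similar argument.''
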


\begin{pf}
     By Lemma \ref{lem:SBTB-forbidden} \textbf{c)}, we have that $F$ either has one crown or no crowns. Now, suppose that $End(F)$ has at least $2$ reticulations $u$ and $v$. By Corollary \ref{cor:endp}, we have a maximal zig-zag trail $(a_1,\ldots,a_k)$ that goes from $u$ to $v$. If this trail does not contain crowns, then it is a maximal $W$-fence, which, by Lemma \ref{lem:SBTB-forbidden} \textbf{a)}, we know that are prohibited in tree-based networks. If the trail has a crown, we can consider $k'\in [k]$ to be the smallest integer such that $(a_i,a_{i+1},\ldots,a_{2k'})$ is a crown for some $i$. Thus, the zig-zag trail $(a_1,\ldots, a_{2k'})$ is a maximal $W$-crown, which, by Lemma \ref{lem:SBTB-forbidden} \textbf{b)} are also prohibited. If $F$ has a crown and $End(F)$ has one reticulation, a similar argument leads us to $F$ having a maximal $W$-crown. In other case, $F$ satisfy \textbf{(1)}, \textbf{(2)} or \textbf{(3)}.
\end{pf}

Now that we have described the structures that appear in semi-binary tree-based networks, our objective is to characterize tree-based networks and to count support trees. Recall that, to construct a support tree $T$, we have to select one reticulation arc from each reticulated cherry shape of the network, in a way that we do not create new leaves. Since we have to make a selection of arcs, we would not need the whole information of each extended zig-zag trail. Instead, we will work with covers. The following results will give us minimal coverings (w.r.t. the inclusion), for each extended zig-zag trail, that will be useful to count support trees in the latter.

\begin{cor}\label{cor:coverN}
    Let $F$ be a generalized $N$-fence with $End(F) = \{u,v_1,\ldots,v_k\}$, where $u$ is the only reticulation endpoint and $v_i$ are tree-vertices. Let $G = \bigcup_{i=1}^k \hat{N}_i$, where $\hat{N}_i$ are maximal $N$-fences in $F$ going from $u$ to $v_i$. Then, $G$ is a covering of $F$. 
\end{cor}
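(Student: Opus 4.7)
The inclusion $A(G)\subseteq A(F)$ is immediate from $G\subseteq F$, so the work lies in showing $A(F)\subseteq A(G)$. I would fix an arbitrary arc $a\in A(F)$ and let $A\in F$ denote a maximal zig-zag trail containing $a$. By Corollary~\ref{cor:struct}, since $F$ is a generalized $N$-fence and hence contains no crowns by Lemma~\ref{lem:SBTB-forbidden}, the trail $A$ is either an $N$-fence or an $M$-fence.

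The $N$-fence case is straightforward: since $A$ is a non-crown maximal zig-zag trail in $F$, its endpoints lie in $End(F)=\{u,v_1,\ldots,v_k\}$, and since an $N$-fence has exactly one reticulation endpoint while $u$ is the only reticulation in $End(F)$, we get $End(A)=\{u,v_i\}$ for some $i$. Thus $A$ is a maximal $N$-fence from $u$ to $v_i$, so $A$ qualifies as $\hat{N}_i$, and $a\in A(A)\subseteq A(\hat{N}_i)\subseteq A(G)$.

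The main obstacle is the $M$-fence case. I would first argue that if $A$ is an $M$-fence in $F$, then both of its endpoints must be tree vertices in $\{v_1,\ldots,v_k\}$. Indeed, a reticulation endpoint of $A$ would have to equal $u$ (the only reticulation in $End(F)$); if both endpoints were reticulations, then for length at least four, $\head(a_1)=\head(a_k)=u$ would make $A$ a crown, contradicting the generalized $N$-fence hypothesis. The length-two case is ruled out by the semi-binary assumption: the two tree-vertex tails each carry an additional outgoing arc that would extend the trail, contradicting maximality of $A$. Hence $End(A)=\{v_i,v_j\}\subseteq\{v_1,\ldots,v_k\}$.

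Having pinned down the structure of $A$, I would then prove that for each arc $a_m$ of $A$ there exists a maximal $N$-fence in $F$ from $u$ to some $v_\ell$ containing $a_m$. My plan is to glue: traverse $A$ from $a_m$ to the endpoint $v_j$ for one half, and on the other side use the connectivity of $F$ via arc-intersection, together with the construction in the proof of Lemma~\ref{lem:endp}, to produce a zig-zag trail from $u$ to $\tail(a_m)$ or $\head(a_m)$ in $A(F)$; the internal reticulations of $A$ must have at least one extra incoming arc coming from a neighboring maximal trail in $F$ (otherwise $A$ would be its own extended zig-zag trail, contradicting $u\in End(F)$), and this extra arc is exactly what makes the pivot to another maximal trail possible. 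The glued object is a zig-zag trail from $u$ to some $v_\ell$ passing through $a_m$; extending it to maximality yields a trail whose reticulation endpoint is $u$ and whose other endpoint cannot be a second reticulation, by the exclusion of $W$-fences and lower $W$-crowns in semi-binary tree-based networks (Lemma~\ref{lem:SBTB-forbidden}). Hence this maximal trail is an $N$-fence from $u$ to some $v_\ell$, qualifies as $\hat{N}_\ell$, and $a_m\in A(\hat{N}_\ell)\subseteq A(G)$.

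The hardest part of the argument is the gluing in the $M$-fence case: one has to verify that the glued trail is a genuine zig-zag trail (using the semi-binary hypothesis, so that each tree vertex of $A$ has a unique alternative outgoing arc providing a controlled local branching) and that the maximal extension terminates at some $v_\ell$ rather than at a second reticulation. With this established, $A(F)\subseteq A(G)$ follows, so $A(G)=A(F)$ and $G$ covers $F$.
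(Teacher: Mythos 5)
Your proposal is correct and follows essentially the same route as the paper: reduce to the maximal zig-zag trail $A$ containing the given arc, note that $A$ is either an $N$-fence from $u$ to some $v_i$ (hence equal to $\hat{N}_i$) or an $M$-fence between two tree-vertex endpoints $v_j,v_l$, and in the latter case show its arcs are absorbed by the trails $\hat{N}_j$ and $\hat{N}_l$. The paper dismisses the $M$-fence case with ``it is easy to see that $a$ must be either an arc in $\hat{N}_j$ or in $\hat{N}_l$'', so your gluing argument via Lemma~\ref{lem:endp} and the crown-freeness of $F$ is a legitimate, more detailed filling-in of that same step rather than a different approach.
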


\begin{pf}
 By Corollary \ref{cor:endp}, there exists zig-zag trails $\hat{N}_i$ from $u$ to $v_i$ contained in $F$. Since $F$ does not have crowns, these trails do not contain crowns and must be unique. Thus, $\hat{N}_i$ are maximal $N$-fences in $F$. Now let us prove that $G = \bigcup_{i=1}^k \hat{N}_i$ covers $F$. We have to prove that $A(G)=A(F)$. Clearly, we have $A(G) \subseteq A(F)$, so we need to show the other inclusion. Let $a\in A(F)$. By construction of $F$, there exists a maximal zig-zag trail $A$ such that $a$ is an arc of $A$, so let us denote $A = (a_1,\ldots,a_{i-1},a,a_{i+1},\ldots,a_m)$. If $\tail(a_1) = u$ and $\head(a_m) = v_j$ or $\head(a_1) = v_j$ and $\tail(a_m) = u$, for some $j\in [k]$, then $A$ is either $\hat{N}_j$ or $\hat{N}_j$ inverted. In both cases we have $a\in A(\hat{N}_j)$. Else, we have $\head(a_1) = v_j$ and $\head(a_m) = v_l$, for some $j,l\in [k]$. In this case, it is easy to see that $a$ must be either an arc in $\hat{N}_j$ or in $\hat{N}_l$. Therefore, $a\in A(G)$.
\end{pf}

\begin{cor}\label{cor:coverM}
    Let $F$ be a generalized $M$-fence with $End(F) = \{v_1,\ldots,v_k\}$, where $v_i$ are tree-vertices. Let $G = \bigcup_{i=2}^k \hat{M}_i$, where $\hat{M}_i$ are maximal $N$-fences in $F$ going from $v_1$ to $v_i$. Then, $G$ is a covering of $F$. 
\end{cor}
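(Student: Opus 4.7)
The proof will closely mirror that of Corollary \ref{cor:coverN}. First, for each $i\in\{2,\ldots,k\}$, since $v_1,v_i\in End(F)$, Corollary \ref{cor:endp} produces a maximal zig-zag trail $\hat M_i\subseteq F$ with $End(\hat M_i)=\{v_1,v_i\}$. Because $F$ contains no crowns (being a generalized $M$-fence), $\hat M_i$ is crown-free, and since its two endpoints are both tree vertices, $\hat M_i$ is forced to be a maximal $M$-fence (rather than an $N$- or $W$-fence). The same crown-free argument used in the proof of Corollary \ref{cor:coverN} yields the uniqueness of $\hat M_i$ among maximal zig-zag trails in $F$ with endpoints $\{v_1,v_i\}$. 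The inclusion $A(G)\subseteq A(F)$ is then immediate.

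For the reverse inclusion, I let $a\in A(F)$ lie on some maximal zig-zag trail $A=(a_1,\ldots,a_m)$ of $F$. Since $F$ is a generalized $M$-fence, the endpoints of $A$ must belong to $End(F)=\{v_1,\ldots,v_k\}$ and are tree vertices; moreover $A$ is itself crown-free. Hence $A$ is a maximal $M$-fence with endpoints $\{v_j,v_l\}\subseteq\{v_1,\ldots,v_k\}$. If $v_1\in\{v_j,v_l\}$, say $j=1$, the uniqueness in the previous paragraph forces $A=\hat M_l$, so $a\in A(G)$.

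The main obstacle, as in the corresponding step of Corollary \ref{cor:coverN}, is the case $v_1\notin\{v_j,v_l\}$. My plan is to argue by a concatenation-and-uniqueness strategy using Lemma \ref{lem:endp}. Pick a vertex $w\in V(A)$ incident to $a$ on the $v_j$-side of $A$, and apply Lemma \ref{lem:endp} to the pair $v_1,w$ inside $F$ to obtain a zig-zag trail $B\subseteq F$ from $v_1$ to $w$. I then concatenate $B$ with the portion of $A$ running from $w$ to $v_j$ (traversed so that the zig-zag pattern is preserved at the junction $w$, using that $F$ is crown-free so no cyclic obstruction appears). The resulting trail is a zig-zag trail from $v_1$ to $v_j$ contained in $F$ which, by uniqueness, must equal $\hat M_j$; in particular, it contains $a$, so $a\in A(\hat M_j)\subseteq A(G)$. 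If instead $w$ lies on the $v_l$-side, the symmetric argument gives $a\in A(\hat M_l)\subseteq A(G)$.

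The subtle point, which I expect to absorb most of the technical care, is verifying that the concatenation of $B$ with the sub-trail of $A$ respects the alternating head/tail-sharing pattern at the shared vertex $w$, and that after any necessary local reordering it still extends to a maximal $M$-fence in $F$ with endpoints $v_1$ and $v_j$ (so that uniqueness applies). The crown-free hypothesis on $F$ is essential here: without crowns, any two zig-zag trails in $F$ sharing an endpoint pattern are forced to coincide, so the concatenation cannot split off into a genuinely new maximal trail.
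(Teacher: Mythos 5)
Your proposal is correct and follows essentially the same route as the paper, whose proof of \Cref{cor:coverM} is simply a pointer to the argument for \Cref{cor:coverN}: existence and uniqueness of the trails $\hat{M}_i$ via \Cref{cor:endp} plus crown-freeness, then the reverse inclusion by examining the endpoints of the maximal zig-zag trail through a given arc. You actually supply more detail than the paper does for the case where neither endpoint of that trail is $v_1$ (the paper's proof of \Cref{cor:coverN} dismisses it with ``it is easy to see''); just note the small slip that $w$ should be the endpoint of $a$ \emph{away} from $v_j$, so that the sub-trail of $A$ from $w$ to $v_j$ actually contains $a$.
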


\begin{pf}
    Analogous proof to the one presented in Corollary \ref{cor:coverN}.
\end{pf}

\begin{cor}\label{cor:coverC}
    Let $F$ be a generalized crown with $End(F) = \{v_1,\ldots,v_k\}$, where $v_i$ are tree-vertices. Let $G = C\cup \bigcup_{i=2}^k \hat{N}_c^i$, where $C$ is the only crown of $F$ and $\hat{N}_c^i$ are maximal lower $N$-crowns in $F$ starting at $v_i$. Then, $G$ is a covering of $F$. 
\end{cor}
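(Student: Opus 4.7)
The plan is to mirror the proof of Corollary~\ref{cor:coverN}, with the unique crown $C$ playing the role that the reticulation endpoint $u$ played there.

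I first construct each $\hat{N}_c^i$ and show it is a maximal lower $N$-crown of $F$. Fix a tree-vertex endpoint $v_i$ and any vertex $w$ of $C$. Applying Lemma~\ref{lem:endp} to $v_i$ and $w$ yields a zig-zag trail $Z$ in $F$ from $v_i$ to $w$. Since $F$ is a generalized crown, by Lemma~\ref{lem:SBTB-forbidden} it contains no lower $W$-crowns nor double-crowns, and by the semi-binary property it contains no upper $W$- or $N$-crowns; hence $Z$ cannot itself contain a crown. Concatenating $Z$ with the full traversal of $C$ starting at $w$ then produces a zig-zag trail of the form $(a_1,\ldots,a_j,c_1,\ldots,c_{2k'})$ with $\head(a_1)=v_i$ a tree-vertex, i.e., exactly a lower $N$-crown. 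Its maximality in $F$ follows on one side from $v_i\in End(F)$ and on the other from the crown having been wrapped around in full.

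Next I prove $A(G)=A(F)$ by double inclusion. The inclusion $A(G)\subseteq A(F)$ is immediate from the construction. For the reverse inclusion, pick $a\in A(F)$ and let $A$ be a maximal zig-zag trail of $F$ containing $a$. If $a$ is an arc of $C$, then $a\in A(G)$ via the crown summand of $G$. Otherwise $A$ is a non-crown maximal zig-zag trail; by Corollary~\ref{cor:struct} together with Lemma~\ref{lem:SBTB-forbidden} and the semi-binary property, $A$ is an $M$-fence, $N$-fence, or lower $N$-crown, and its endpoints lie in $End(F)\cup V(C)$. Adapting the case analysis in Corollary~\ref{cor:coverN}: if $A$ has two tree-vertex endpoints $v_j,v_\ell$, the arc $a$ lies on the sub-trail of $A$ leading to one of them and is therefore an arc of either $\hat{N}_c^j$ or $\hat{N}_c^\ell$; if $A$ has a single tree-vertex endpoint $v_j$ (the other end on $C$), then $A$ is, up to reversal, a sub-trail of $\hat{N}_c^j$; and if neither endpoint of $A$ is in $End(F)$, the connectedness of $F$ together with the uniqueness of $C$ forces $A$ to be absorbed into some $\hat{N}_c^j$ via its attachment to $C$.

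The main obstacle I anticipate is precisely this last case analysis, in particular ensuring that no non-crown arc is ``stranded'' between two vertices of $C$ without being picked up by some $\hat{N}_c^i$. This is where the uniqueness of the crown (Lemma~\ref{lem:SBTB-forbidden}(c)) and the absence of $W$-fences and lower $W$-crowns (Lemma~\ref{lem:SBTB-forbidden}(a,b)) do the work: any zig-zag excursion that leaves $C$ must either return to $C$ (and the arcs are then swept up by the crown traversal inside some $\hat{N}_c^j$) or terminate at a tree-vertex endpoint $v_j$, in which case the arcs along the way lie on $\hat{N}_c^j$.
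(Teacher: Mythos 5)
Your proposal is correct and follows essentially the same route as the paper, which simply declares the proof to be analogous to that of \Cref{cor:coverN}: you construct the lower $N$-crowns via \Cref{lem:endp} (with the unique crown $C$ playing the role of the reticulation endpoint $u$) and then verify $A(G)=A(F)$ by the same double-inclusion and endpoint case analysis. Your version is in fact more detailed than the paper's, which leaves the adaptation entirely to the reader.
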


\begin{pf}
    Analogous proof to the one presented in Corollary \ref{cor:coverN}.
\end{pf}


Now, we have concrete coverings for each extended zig-zag trail that appears in semi-binary tree-based networks. For each extended zig-zag trail $F$, and $G$ constructed in Corollaries \ref{cor:coverN}, \ref{cor:coverM} and \ref{cor:coverC}, depending on the nature of $F$, we will refer to $G$ as an \emph{efficient covering} of $F$.
We will use efficient coverings to construct support trees.

For a covering $G = \cup_{i=1}^k A_i$, with $A_i = (a_1^i,a_2^i,\ldots,a_{m_i}^i)$, we will denote by $d_{2j-1}^i$ the child of $\head(a_{2j-1}^i)$ for $j\in \left[\frac{k_i-1}{2}\right]$. 
The multiplicity of each arc $c_{2j-1}^i := \head(a_{2j-1}^i)d_{2j-1}^i$, in the bulged version, is equal to the indegree of~$\head(a_{2j-1}^i)$ minus one.
In all cherry covers, each of these parallel arcs will be covered in a reticulated cherry shape as the lowest arc. Furthermore, no two parallel arcs appear in the same reticulated cherry shape.
The symmetry of each parallel arc means that the order in which they are covered is irrelevant in a cherry cover. 
Thus, we allow for the arc $c_j^i$ to appear in multiple reticulated cherry shapes.
It is to be understood, that each appearance of such arcs $c_j^i$ refers to different versions of the same parallel arc.


Consider an efficient covering $G = \bigcup_{i=1}^k A_i$ of a generalized $M$ or $N$ fence~$F$, where each of the~$A_i = (a_1^i,a_2^i,\ldots,a_{m_i}^i)$ start at the same node, so that~$a_1^1 = a_1^2 = \ldots = a_1^k$.
Since every pair of maximal fences~$A_i$ has different endpoints, no maximal fence is properly contained in another maximal fence.
This means that every maximal fence~$A_i$ `splits' off from the other fences at some point. 
In particular, each fence~$A_i$ splits at some point from the previous fences $A_j$ with $j<i$. Since $F$ does not have any crowns, once the fence splits, it ends in a branch structure that does not intersect the previous trails again. 
We define the \emph{split number} of a zig-zag trail~$A_i$
as $s_G^{A_i} = \min\{j\in \{1,\ldots,m_i\}\ |\ a_j^i\notin A(\bigcup_{l=1}^{i-1}A_l)\}$. 
 
\begin{lem}\label{lem:genN}
    Let $N$ be a semi-binary tree-based network. Let $F$ be a generalized $N$-fence of $N$ and $G = \bigcup_{i=1}^k \hat{N}_i$ its efficient covering, with $\hat{N}_i=(a_1^i,a_2^i,\ldots,a_{k_i}^i)$. Then, for $i\in [k]$ and $j\in \left[\frac{k_i-1}{2}\right]$, every cherry cover of $N$ contains the reticulated cherry shapes $\{c_{2j-1}^{i},a_{2j}^i,a_{2j+1}^i\}$.
\end{lem}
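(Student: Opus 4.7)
The plan is to mimic the inductive argument of Lemma \ref{lem:n_fence_unique}, carrying it out locally along each fence $\hat{N}_i$ of the efficient covering, while exploiting the crown-freeness of generalized $N$-fences to handle the interaction between fences in the semi-binary setting. The key invariant I will use is that at every reticulation $r$ of indegree $d$, there are exactly $d-1$ parallel outgoing arcs in $B(N)$, each of which must appear as the $p_x x$ arc of some reticulated cherry shape; the $d-1$ middle arcs are then forced to be $d-1$ distinct incoming arcs of $r$ (each must have a tree-vertex tail, and each arc is covered in exactly one shape). Hence in any cherry cover, exactly one incoming arc of $r$ fails to be a middle arc --- the arc that survives in the associated support tree.

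I will induct on $j\in\left[\frac{k_i-1}{2}\right]$ for each fixed $i$. For the base case $j=1$, the arc $a_1^i$ has tail at the reticulation endpoint $u$, so it cannot be a middle arc (middle arcs require a tree-vertex tail). Therefore $a_1^i$ is forced to be the unique non-middle incoming arc of $r_1 := \head(a_1^i)$, and every other incoming arc of $r_1$ --- in particular $a_2^i$ --- must be a middle arc. By the semi-binary hypothesis the tree vertex $\tail(a_2^i) = \tail(a_3^i)$ has outdegree $2$, so the shape containing $a_2^i$ as middle arc is forced to have $a_3^i$ as its other terminal, and the reticulation-child arc is $c_1^i$ (up to isomorphism of the parallel arcs at $r_1$), yielding $\{c_1^i, a_2^i, a_3^i\}$. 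For the inductive step, the inductive hypothesis makes $a_{2j-1}^i$ appear as the $p_y y$ terminal of the shape $\{c_{2j-3}^i, a_{2j-2}^i, a_{2j-1}^i\}$, so $a_{2j-1}^i$ cannot also be a middle arc at $r_j := \head(a_{2j-1}^i)$. The same forcing as in the base case then produces the shape $\{c_{2j-1}^i, a_{2j}^i, a_{2j+1}^i\}$.

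The main obstacle is to ensure that this ``free arc'' argument remains internally consistent when several distinct fences of $G$ visit the same reticulation $r$: a priori, each such fence would designate its own entering arc at $r$ as the free arc, and these designations could conflict. I plan to resolve this by invoking the absence of crowns in $F$. If two fences in $G$ reached $r$ via different prefixes, then concatenating the two paths from their latest common ancestor reticulation up to $r$ would produce, using Lemma \ref{lem:endp} to certify zig-zag structure, a zig-zag trail whose first and last arcs share the head $r$, i.e.\ a crown contained in $F$. This contradicts the fact that $F$ is a generalized $N$-fence (Corollary \ref{cor:struct}). Hence any two fences visiting $r$ share the common prefix up to their entering arc at $r$, so the entering arc is unambiguous, and the per-fence induction collectively produces a coherent set of forced reticulated cherry shapes inside every cherry cover of $N$.
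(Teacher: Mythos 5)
Your proof is correct and follows essentially the same route as the paper's: both propagate forced reticulated cherry shapes along each $\hat{N}_i$, starting from the observation that $a_1^i$ has a reticulation tail and so cannot be a middle arc, and using the fact that at each reticulation of indegree $d$ exactly $d-1$ incoming arcs must serve as middle arcs (one per parallel outgoing arc in $B(N)$). Your added consistency check for fences meeting at a common reticulation is sound but is already implicit in the paper's discussion of split numbers preceding the lemma, where crown-freeness is used to show that fences do not re-intersect after splitting.
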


\begin{pf}
The arc~$c_1^i$ must be covered as the bottom arc of a reticulated cherry shape.
    Since~$a_1^i$ is incident to two reticulations, it cannot be covered as a middle arc.
    This means that all other incoming arcs of~$\tail(c_1^i)$ must be covered as a middle arc, together with~$c_1^i$, i.e., 
    the shape must be of the form~$\{c_1^i, a_2^i, a_3^i\}$. 
    Since the reticulated cherry shapes contain~$a_3^i$, this means in particular that~$\tail(c_3^i)$ has one incoming arc $a_3^i$ that is already covered.
    Then, all other incoming arcs of~$\tail(c_3^i)$ must be covered as a middle arc.
    That is, we must cover the parallel arcs~$c_3^i$ with reticulated cherry shapes~$\{c_3^i, a_4^i, a_5^i\}$.
    This argument can be repeated for all arcs~$c_{2j-1}^i$ for $i\in[k]$ and $j\in [\frac{k_i-1}{2}]$ to obtain a covering of~$F$ and all outgoing arcs, and we are done.
    
\end{pf}

We define the \emph{length} of a generalized $M$-\emph{fence} as the number of arcs in $F$, denoted by $k_F$. For an efficient covering $G = \cup_{i=1}^k \hat{M}_i$ of $F$, let $k_i$ denote the length of each $\hat{M}_i$ and $s_G^{\hat{M}_i}$ its split number with respect to $G$.
It is clear that $k_F = k_1 + \sum_{i=2}^k (k_i-(s_G^{\hat{M}_i}-1))$.

\begin{lem}\label{lem:genM}
    Let~$N$ be a semi-binary tree-based network, and let~$F$ be a generalized $M$-fence of length~$k_F\ge 4$. Let $G = \bigcup_{i=1}^k \hat{M}_i$ be an efficient covering of $F$ for some designated vertex~$v$, with $\hat{M}_i = (a_1^i,\ldots,a_{k_i}^i)$ such that~$\head(a_1^i) = v$. There exists exactly $k_F/2$ distinct ways of covering the arcs of~$F$ and the arcs~$c^i_{2j}$ for~$i\in \left[k\right]$ and $j\in\left[\frac{k_i-2}{2}\right]$ using cherry and reticulated cherry shapes.
\end{lem}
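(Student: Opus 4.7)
The plan is to exhibit a bijection between valid cherry covers and the set $I_F$ of tree vertices of $F$ whose two outgoing arcs both lie in $A(F)$. Every arc of $F$ has a tree tail, and a tail-pairing argument on the constituent maximal $M$-fences of $F$ shows that a tree vertex of $F$ either has both outgoing arcs in $A(F)$ (so lies in $I_F$) or belongs to $End(F)$ with no outgoing arc in $A(F)$; hence each vertex of $I_F$ contributes two outgoing fence arcs and $k_F = 2|I_F|$.

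Next I would establish the combinatorial skeleton of $F$. Associate to $F$ the hypergraph on node set $I_F$ whose hyperedges are the reticulations of $F$, with each hyperedge $r$ of arity $d_r$ joining the $d_r$ tree tails of the incoming arcs of $r$. A cycle $t_1, r_1, t_2, r_2, \ldots, t_\ell, r_\ell, t_1$ in this hypergraph would immediately yield a crown inside $F$ by concatenating the incoming arcs $t_i \to r_i$ and $t_{i+1} \to r_i$ into a closed zig-zag trail of length $2\ell$ with $\tail(a_1) = \tail(a_{2\ell}) = t_1$, so absence of crowns forces the hypergraph to be acyclic; together with connectedness (inherited from $F$ being an extended zig-zag trail) this makes it a hypertree, and the Euler identity gives $|I_F| = 1 + \rho$ where $\rho := \sum_{r}(d_r - 1)$. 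Since the bulged version has $k_F + \rho = 2 + 3\rho$ arcs to cover and each of the $\rho$ parallel outgoing arcs of the reticulations must be the bottom arc of some reticulated cherry shape, every valid cover contains exactly $\rho$ reticulated cherry shapes and exactly one cherry shape.

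For the bijection, fix $t \in I_F$ and root the hypertree at $t$. For each reticulation $r$ let $q_t(r)$ denote the unique incoming tree tail of $r$ on the hypertree path from $r$ to $t$. Define $P_t$ to be the cover consisting of the cherry shape on the two outgoing arcs of $t$ together with, for each $r$, the $d_r - 1$ reticulated cherry shapes whose middles are the incoming arcs of $r$ from the tree tails other than $q_t(r)$, whose terminals are the sister outgoing arcs at those tails, and whose bottoms are the parallel copies of $r$'s outgoing arc. Tracing outward from $t$ in the rooted hypertree shows each fence arc is covered exactly once: every arc $q_t(r) \to r$ appears as a terminal of the shape at the parent reticulation of $q_t(r)$, or lies in the cherry shape when $q_t(r) = t$, and every other incoming arc of $r$ is a middle at $r$.

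Injectivity of $t \mapsto P_t$ is immediate since the unique cherry shape of $P_t$ identifies $t$. For surjectivity, given any cover $P$ with cherry at $t$, the arcs $t \to r$ for reticulations $r$ adjacent to $t$ cannot be middles (they lie in the cherry), which forces the remaining $d_r - 1$ incoming arcs of each such $r$ to be the middles and the sister outgoing arcs at the corresponding tree tails to be the terminals; iterating by hypertree distance from $t$ propagates this forcing outward and yields $P = P_t$, so there are exactly $|I_F| = k_F/2$ cherry covers. The main obstacle is the identification of the hypertree structure from the forbidden-crown hypothesis, together with the management of the cascade in the semi-binary setting, where at each reticulation one must simultaneously distribute multiple parallel bottom arcs among the correct middles and sister terminals rather than handle a single copy as in the binary case.
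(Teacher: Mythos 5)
Your proposal is correct, and its core is the same as the paper's: every cherry cover of the arcs in question contains exactly one cherry shape, the location of that cherry shape forces all the reticulated cherry shapes, and the number of admissible locations is $k_F/2$. Where you genuinely diverge is in the supporting machinery. The paper reaches the key numerical fact (exactly $k_F/2-1$ reticulated cherry shapes, hence two leftover arcs) by directly asserting the counts of reticulations and reticulation arcs of $F$ from the efficient-covering bookkeeping, and then argues the two leftover arcs must be siblings by a local contradiction borrowed from Lemma~\ref{lem:M-fenceSizeCC}; the forcing of the remaining shapes is left largely implicit. You instead encode $F$ as a hypergraph on the tree tails with reticulations as hyperedges, derive Berge-acyclicity from the absence of crowns (a Berge cycle would concatenate into a closed zig-zag trail), and read off $|I_F|=1+\sum_r(d_r-1)$ from the Euler identity for hypertrees; you then make the bijection $t\mapsto P_t$ fully explicit, with injectivity from the cherry shape identifying $t$ and surjectivity from the outward propagation of the forcing. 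Your route makes explicit two things the paper glosses over: where exactly the crown-freeness hypothesis enters the count, and why every one of the $k_F/2$ candidate locations actually yields a valid cover (the existence direction). The paper's version is shorter and stays closer to the binary argument it is generalizing; yours is more self-contained and would transfer more readily to settings where the efficient covering is awkward to index. Both handle the semi-binary bulged arcs the same way, by noting that each of the $\sum_r(d_r-1)$ parallel copies must be the bottom arc of its own reticulated cherry shape.
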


\begin{pf}

    The generalized $M$-fence $F$ contains $k_F/2-k$ reticulations and $k_F-(k+1)$ reticulation arcs, so we must have exactly $k_F/2-1$ reticulated cherry shapes covering the arcs of~$F$ and the outgoing arcs of the reticulations. Thus, $k_F-2$ arcs of $F$ are covered by reticulated cherry shapes.
    The remaining two arcs of~$F$ must be arcs~$a^i_m$ and~$a^j_n$ for some $i,j\in [k]$ and $m\in[k_i]$ and $n\in[k_j]$. Without loss of generality assume that $n>m$.
    By a similar argument to the one given in Lemma \ref{lem:M-fenceSizeCC}, we can see that $i = j$ and $n = m+1$.

    Therefore, every generalized $M$-fence of length~$k_F\ge4$ must be covered by a single cherry shape and~$k_F/2-1$ reticulated cherry shapes in every cherry cover.
    In particular, note that the choice of arcs for the cherry shape determines the rest of the reticulated cherry shapes that cover the rest of the arcs of $F$. 
    Since there are $k_F/2$ potential locations for cherry shapes,
    we conclude that there are $k_F/2$ distinct ways of covering the arcs of~$F$ and the arcs~$\head(a^i_{2j})c^i_{2j}$ for~$i\in [k]$ and $j\in\left[\frac{k_i-2}{2}\right]$ using cherry and reticulated cherry shapes.
\end{pf}

\begin{lem}\label{lem:gencrown}
        Let~$N$ be a semi-binary tree-based network, and let~$F$ be a generalized crown. Let $G = \bigcup_{i=1}^k \hat{N_c}^i\cup C$ be an efficient covering of $F$, where~$C = (b_1,\ldots,b_{2m_0})$ is a crown and $\hat{N_c}^i = (a_1^i,\ldots,a_{m_i}^i, b_{\sigma_i(1)}, \ldots, b_{\sigma_i(2m_0)})$ are the lower~$N$-crowns, for some cyclic permutations $\sigma_i:[2m_0] \rightarrow [2m_0]$. 
    Also, let $c_\ell$ denote the arc between $\head(b_\ell)$ and its child, and $c_j^i$ the arc between $\head(a_j^i)$ and its child. Then, there exist exactly 2 distinct ways of covering the arcs of~$F$, the arcs~$c_{\ell}$ for $\ell\in [2m_0]$ and the arcs~$c^i_{2j}$ for~$j\in\left[\frac{m_i}{2}\right]$ and $i\in[k]$ using reticulated cherry shapes.
\end{lem}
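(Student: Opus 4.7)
The plan is to combine two forced-covering observations. The branches of the generalized crown should force a unique covering of their arcs by reticulated cherry shapes, in direct analogy with \Cref{lem:n_fence_unique} and \Cref{lem:genN}. The cyclic crown part should then admit exactly two consistent coverings, in direct analogy with \Cref{lem:CrownSizeCC}. The target count of $2$ will come from showing that the branch-forced shapes are compatible with, and do not affect, the binary rotational choice at the crown.

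First, I would show that the covering of each branch $(a_1^i,\ldots,a_{m_i}^i)$, together with the associated outgoing arcs $c_2^i, c_4^i, \ldots, c_{m_i}^i$, is uniquely determined. The key observation is that the tree-vertex endpoint $v_i = \head(a_1^i)$ has no outgoing arcs in $F$, so $a_1^i$ cannot be part of a cherry shape or serve as a middle/bottom arc, and thus must be the top of some reticulated cherry shape. Since both the top and the middle of a reticulated cherry shape share the same tree-vertex tail ($p_y$), and since $\tail(a_1^i) = \tail(a_2^i)$ has only these two outgoing arcs, the shape containing $a_1^i$ is forced to be $\{c_2^i, a_2^i, a_1^i\}$, with $a_2^i$ as middle. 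Iterating this argument along the branch (propagating from each tree vertex to the next reticulation and back), we obtain the forced shapes $\{c_{2j}^i, a_{2j}^i, a_{2j-1}^i\}$ for $j = 1,\ldots,m_i/2$; in particular, $a_{m_i}^i$ is used as a middle arc at a crown reticulation.

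Next I would count at each crown reticulation $r$. If $r$ has $b_r$ branches ending at it, then its indegree is $d_r = 2 + b_r$, and in the bulged version $r$ has $d_r - 1 = 1 + b_r$ parallel outgoing arcs. The branch-forced shapes use $b_r$ of these parallel arcs as bottoms and $b_r$ of $r$'s incoming arcs (the $a_{m_i}^i$'s) as middles. Hence exactly one parallel outgoing arc and exactly one of the two crown-incoming arcs at $r$ remain to be assigned, with the remaining crown-incoming arc forced to be a top. This is exactly the same binary choice as in the pure crown case.

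Finally, I would invoke the rotational-consistency argument from the proof of \Cref{lem:CrownSizeCC}: each crown tree vertex $t$ must have both of its two outgoing arcs in the same reticulated cherry shape (since any such shape containing an outgoing arc of $t$ is forced to have $p_y = t$, and thus contains both outgoing arcs of $t$ as middle and top). Propagating this constraint cyclically around the crown yields exactly two consistent global coverings corresponding to the two rotational directions. The main obstacle is precisely this compatibility step: one must verify that the middles contributed by branches at attachment reticulations consume exactly the extra incoming and outgoing slots created by the indegree excess $b_r$ at each $r$, so that the remaining covering problem restricted to the crown is structurally identical to the binary crown setting. This follows from the $d_r = 2 + b_r$ count above, and the total number of coverings is therefore $2$.
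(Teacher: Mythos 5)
There is a genuine gap at the base case of your branch propagation. You assert that $a_1^i$ ``cannot be part of a cherry shape'' because its head $v_i$ has no outgoing arcs in $F$, but this is a non sequitur: the only candidate cherry shape containing $a_1^i$ is $\{a_1^i,a_2^i\}$, whose internal node is $\tail(a_1^i)=\tail(a_2^i)$, and nothing about $v_i$ or its outgoing arcs obstructs it. Locally, $a_1^i$ can be covered either as the top arc of $\{c_2^i,a_2^i,a_1^i\}$ or in the cherry shape $\{a_1^i,a_2^i\}$, and ruling out the latter is not a local matter. The situation is \emph{not} analogous to Lemma~\ref{lem:n_fence_unique} or Lemma~\ref{lem:genN}: there the forcing starts at a reticulation endpoint, whose first arc joins two reticulations and hence genuinely cannot be a middle arc; a generalized crown has no reticulation endpoint, and the role of that endpoint is played by the crown itself. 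To exclude the cherry shape you must propagate its consequences all the way to the crown: if $a_2^i$ is not a middle arc then $a_3^i$ must be one, so $a_4^i$ is a top arc, and so on until $a_{m_i}^i$ is a top arc at the reticulation $\head(b_{\sigma_i(1)})=\head(b_{\sigma_i(2m_0)})$, forcing both $b_{\sigma_i(1)}$ and $b_{\sigma_i(2m_0)}$ to be middle arcs; alternating around the even-length crown then makes $b_{\sigma_i(2m_0)}$ a top arc, a contradiction. So your claimed forcing is true, but only via the crown, which defeats your stated plan of settling the branches independently of, and prior to, the crown.

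The paper's proof runs in the opposite direction and thereby avoids this: it first shows that the crown together with the arcs $c_\ell$ admits exactly two alternating coverings (the binary crown argument), and then propagates outward from each crown reticulation along the branches, where every step is locally forced because exactly one incoming arc per reticulation fails to be a middle arc. Your second and third paragraphs (the count $d_r = 2+b_r$ at attachment reticulations and the rotational consistency) are sound once the forcing is in place; to repair the argument, either reorder it so the crown is handled first, or prove the base case by the global parity argument sketched above. You should also note that branches may attach to other branches rather than directly to the crown, so interior reticulations of a branch can have indegree greater than two; the tree-like propagation still works but needs to be stated for all endpoints simultaneously rather than one branch at a time.
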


\begin{pf}
    By the semi-binary property and the fact that for each reticulation, only one of its incoming arcs is not covered as the middle arc of a reticulated cherry shape, we have that the arcs of $C$ and the arcs~$c_1,\ldots, c_{2m_0}$ can be covered in two different ways in the bulged version of $N$, as we had in the binary case.

    Once the arcs of the crown are covered, the only arcs left to cover are the arcs $a_1^i,\ldots, a_{m_i}^i$ and the arcs~$c^i_{2j}$ for $j\in\left[\frac{m_i}{2}\right]$ and $i\in [k]$. But since $\hat{N_c}^i$ are lower $N$-crowns, we have $\head(a_{m_i}^i) = \head(b_{\sigma_i(1)}^i) = \head(b_{\sigma_i(2m_0)}^i)$, and we have that either $b_{\sigma_i(1)}$ or $b_{\sigma_i(2m_0)}$ are covered in a reticulated cherry shape containing $c_{\sigma_i(1)}$. Thus, the cherry cover contains the reticulated cherry shape $\{c_{\sigma_i(1)},a_{m_i}^i,a_{m_i-1}^i\}$. 
    Now we have $a_{m_i-1}^i$ covered in a shape that does not contain $c_{m_i-1}^i$. 
    Because of this, every reticulation arc incident to $\tail(c_{m_i-1}^i)$ that is not~$a_{m_i-1}^i$ must be covered as a reticulated cherry shape together with $c_{m_i-1}^i$.
    Therefore, we have a reticulated cherry shape of the form $\{c_{m_i-1}^{i},a_{m_i-2}^i,a_{m_i-3}^i\}$. Note that $m_i$ is even, since $\hat{N}_c^i$ is a lower $N$-crown. Repeating the argument, we have reticulated cherry shapes of the form $\{c_{m_i-2j+1}^{i},a_{m_i-2j}^i,a_{m_i-2j-1}^i\}$ for $1\leq j\leq \frac{m_i}{2}$, and this hold for every $i\in [k]$. Observe that with this argument we covered all the arcs of $F$, as well as the arcs~$c_{\ell}$ for $\ell\in [2m_0]$ and the arcs~$c^i_{2j}$ for~$j\in\left[\frac{m_i}{2}\right]$ and $i\in[k]$.

    Summing up, we have $2$ possibilities of covering the arcs of $C$, and once they are covered, the rest of the shapes that cover the arcs of $F$ are fixed. 
    It follows that every cherry cover of~$N$ covers the arcs of $F$ in one of two possible ways.
\end{pf}

\begin{thm}\label{thm:SBcount}
        Let~$N$ be a semi-binary tree-based network. 
    Let~$A_1,A_2,\ldots, A_m$ denote the generalized $M$-fences of~$N$ of length $k_i\ge 4$ where $i\in [m]$ and let~$B_1,B_2,\ldots,B_c$ denote the generalized crowns of~$N$.
    Then~$N$ has $2^{c-m} k_1 k_2 \cdots k_m$ distinct cherry covers and support trees.
\end{thm}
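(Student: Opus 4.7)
The approach is to mirror the proof of \Cref{thm:NumberOfCherryCovers}, replacing the fence decomposition by the extended fence decomposition and the binary fences by their generalized counterparts. I would begin by invoking \Cref{prop:extended_decomp} to obtain the unique extended fence decomposition $D$ of $N$, and \Cref{cor:struct} to classify each $F\in D$ as a generalized $N$-fence, a generalized $M$-fence, or a generalized crown.

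Next, for any cherry cover $P$ of $B(N)$, I would partition $P$ into blocks indexed by $D$: for each $F\in D$, let $P_F\subseteq P$ consist of the shapes of $P$ whose arcs lie in $A(F)$ or are (parallel copies of) outgoing arcs of reticulations of $F$. The key observation is that every parallel copy of an outgoing arc of a reticulation $r$ in $B(N)$ must be covered as the bottom arc of a reticulated cherry shape, and the middle arc of this shape is an incoming arc of $r$ lying in the same extended zig-zag trail. Similarly, the two arcs of a cherry shape share a tail and hence belong to the same trail. Consequently every shape belongs to exactly one block $P_F$, and the $P_F$'s partition $P$; moreover, the choices made inside different blocks do not interfere with one another.

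It then remains to count the admissible blocks for each type of $F$. By \Cref{lem:genN}, a generalized $N$-fence has a unique block (the reticulated cherry shapes are forced). By \Cref{lem:genM}, a generalized $M$-fence of length $k_i\ge 4$ admits exactly $k_i/2$ blocks, while a generalized $M$-fence of length $2$ admits only the single block consisting of one cherry shape. By \Cref{lem:gencrown}, each generalized crown admits exactly $2$ blocks. Multiplying these local counts yields
\[
\prod_{i=1}^m \frac{k_i}{2}\cdot 2^c \;=\; 2^{c-m}\, k_1 k_2\cdots k_m
\]
distinct cherry covers of $B(N)$, and \Cref{prop:CC=ST-SB} gives the same count for support trees of $N$.

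The main obstacle is the partition step: one must rigorously verify that no shape of $P$ ever bridges two distinct extended zig-zag trails (so that the blocks $P_F$ are indeed disjoint and independent), and that the efficient coverings introduced in Corollaries \ref{cor:coverN}, \ref{cor:coverM}, and \ref{cor:coverC} faithfully capture all the freedom available when locally covering each $F$, with nothing lost or double-counted.
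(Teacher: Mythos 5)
Your proposal is correct and follows essentially the same route as the paper's proof: partition the arcs via the extended fence decomposition, classify each maximal extended zig-zag trail with \Cref{cor:struct}, apply Lemmas \ref{lem:genN}, \ref{lem:genM}, and \ref{lem:gencrown} for the local counts, and multiply. The paper asserts the block-independence step with the same brief justification you give (arcs with a reticulation tail are forced into reticulated cherry shapes whose remaining arcs lie in the same trail), so your identification of this as the point needing care is apt but does not constitute a divergence.
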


\begin{pf}
    Let $P$ be a cherry cover. Let $D$ be the extended fence decomposition of $N$. By Corollary \ref{cor:struct}, we know that the only extended zig-zag trails in $D$ are generalized $N$ and $M$-fences and generalized crowns. This decomposition gives us a partition of the arcs of $N$ into disjoint blocks, each one of them of three different types:

    \begin{itemize}
            \item[1)] Generalized $N$-fences without the only arc connecting two reticulations, and adding the outgoing arcs of the rest of each reticulation.
        \item[2)] Generalized $M$-fences together with the outgoing arcs of each reticulation.
        \item[3)] Generalized crowns without the arcs connecting reticulations and adding the outgoing arcs of the reticulations.
    \end{itemize}

    By Lemma \ref{lem:genN}, we know that the type 1) structures can be covered in $1$ way. By Lemma \ref{lem:genM}, for a generalized $M$-fence of length $k$, we have that type 2) structures can be covered in $k/2$ ways. And, using Lemma \ref{lem:gencrown}, type 3) structures can be covered in $2$ different ways. It follows that there are

    $$2^c\times \prod_{i=1}^m\frac{k_i}{2}$$

    many possible cherry covers.
\end{pf}

\begin{thm}\label{thm:ForbiddenStructure}
    Let $N$ be a semi-binary network, and let $D$ be its extended fence decomposition. Then, $N$ is tree-based if and only if, for all zig-zag fences $F\in D$ in $N$, the following holds:
    \begin{itemize}
        \item[(i)] $F$ does not contain any $W$-fence.
        \item[(ii)] $F$ does not contain any lower $W$-crown.
        \item[(iii)] $F$ contains at most $1$ binary crown.
    \end{itemize}
\end{thm}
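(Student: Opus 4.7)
The forward direction is exactly Lemma~\ref{lem:SBTB-forbidden}, so I focus on the backward direction: assuming conditions (i), (ii), (iii) hold for every $F \in D$, I must produce a cherry cover of $B(N)$, which by Theorem~\ref{thm:TBiffCC} certifies that $N$ is tree-based.

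First, the structural classification of Corollary~\ref{cor:struct} still applies in this setting: its proof uses only the three forbidden-substructure conclusions of Lemma~\ref{lem:SBTB-forbidden}, which are precisely our hypotheses (i)--(iii), rather than the tree-based assumption directly. Hence every $F \in D$ is a generalized $N$-fence, a generalized $M$-fence, or a generalized crown.

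Next, for every $F \in D$, I construct a local family of shapes following the explicit prescriptions inside the proofs of Lemmas~\ref{lem:genN}, \ref{lem:genM}, and~\ref{lem:gencrown}: for a generalized $N$-fence, the forced family $\{c^i_{2j-1}, a^i_{2j}, a^i_{2j+1}\}$ of reticulated cherry shapes; for a generalized $M$-fence of length $k_F$, one of the $k_F/2$ valid cherry-shape placements together with the surrounding reticulated cherry shapes; and for a generalized crown, one of the two crown covers, extended along the attached lower $N$-crowns. These recipes are purely local and depend only on the shape of $F$; although the lemmas are stated under a tree-based hypothesis in order to describe every cherry cover of $N$, the constructions themselves apply to any generalized fence in isolation.

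The main obstacle is to verify that these local families glue into a single cherry cover of $B(N)$ in which every non-root arc is covered exactly once. Any two arcs of $N$ sharing an endpoint lie in the same extended zig-zag trail, so all incoming arcs of any reticulation $r$ belong to a single $F \in D$. Within that $F$, the local prescription produces $\indeg(r)-1$ reticulated cherry shapes whose bottom arcs are the $\indeg(r)-1$ parallel copies (in $B(N)$) of the outgoing arc of $r$ and whose middle arcs are $\indeg(r)-1$ of $r$'s incoming arcs; the single remaining incoming arc of $r$ is covered inside the same $F$ either as the top arc of a neighbouring reticulated cherry shape or, in the case where $F$ is a generalized $M$-fence, as an arc of its unique cherry shape. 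Because the extended fence decomposition partitions the arcs of $N$ and each parallel arc of $B(N)$ is owned by a unique $F$, the local families combine without overlap or omission, yielding the required cherry cover of $B(N)$; Theorem~\ref{thm:TBiffCC} then concludes the proof.
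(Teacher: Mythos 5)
Your proof is correct and follows essentially the same route as the paper's: the forward direction is Lemma~\ref{lem:SBTB-forbidden}, and the backward direction classifies each $F\in D$ via Corollary~\ref{cor:struct} (whose proof indeed only needs (i)--(iii)) and then glues the local coverings of Lemmas~\ref{lem:genN}, \ref{lem:genM} and~\ref{lem:gencrown} into a cherry cover of $B(N)$, concluding with Theorem~\ref{thm:TBiffCC}. The only blemish is the auxiliary claim that any two arcs sharing an endpoint lie in a common extended zig-zag trail --- this is false for an incoming and an outgoing arc of a tree node, which share neither a head nor a tail --- but the instance you actually use (all incoming arcs of a reticulation share a head) is valid, so the gluing argument stands.
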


\begin{pf}
    The only if part follows directly from Lemma \ref{lem:SBTB-forbidden}. For only if part, let us assume that $N$ does not have any maximal extended zig-zag trail $F$ satisfying \textit{(i)}, \textit{(ii)} or \textit{(iii)}. Thus, each of the maximal extended zig-zag trails are generalized $N$-fences, generalized $M$-fences or generalized crowns. Let us construct a cherry cover of $N$. We can partition the arcs of $N$ in its set of maximal extended zig-zag trails, which gives another partition in blocks of the three different types:

    \begin{itemize}
        \item[1)] Generalized $N$-fences without the only arc connecting two reticulations, and adding the outgoing arcs of the rest of each reticulation.
        \item[2)] Generalized $M$-fences together with the outgoing arcs of each reticulation.
        \item[3)] Generalized crowns without the arcs connecting reticulations and adding the outgoing arcs of the reticulations.
    \end{itemize}

    If we can cover each block with cherry shapes and reticulated cherry shapes, that would give us a cherry cover for the whole network.

    Therefore, let us consider, for each block of type 1), the set of shapes and reticulated cherry shapes defined in Lemma \ref{lem:genN}, and for the structures of type 2) and 3), one of the sets defined in the proof of Lemmas \ref{lem:genM} and \ref{lem:gencrown}, respectively. Clearly, this gives us a proper covering of the blocks in cherry shapes and reticulated cherry shapes, so the union of all such sets gives us a covering for the whole network in disjoint shapes. This is a cherry cover for $N$, and by Theorem \ref{thm:TBiffCC}, $N$ is tree-based. 
    \end{pf}

\section{Characterization of semi-binary tree-child networks}
\label{sec:TC}

Recall that $\mathcal{P}_{N}$ denotes the set of all cherry covers of a network~$N$. In this section, we characterize semi-binary tree-child networks \cite{cardona2008comparison} in terms of $|\mathcal{P}_{N}|$.

\begin{lem}\label{lem:cota2}
Let $N$ be a semi-binary network with $k$ reticulations with indegrees $r_1,\ldots,r_k$, respectively. Then,  $|\mathcal{P}_{N}|\leq \prod_{i=1}^k r_i$.
\end{lem}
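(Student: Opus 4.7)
The plan is to leverage Proposition \ref{prop:CC=ST-SB}, which identifies $|\mathcal{P}_N|$ with $|\mathcal{S}_N|$, so as to replace a statement about arc partitions of $B(N)$ by a more tractable counting question about subgraphs of $N$. Once rephrased, the bound follows almost directly from the definition of a support tree.

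First, I would apply Proposition \ref{prop:CC=ST-SB} to rewrite the desired inequality as $|\mathcal{S}_N|\le \prod_{i=1}^k r_i$. Next, I would recall that, by definition, every support tree $T$ of $N$ is obtained by deleting at each reticulation $v_i$ all but one of its $r_i$ incoming arcs (without creating new leaves). Hence $T$ is fully determined by a choice, for each reticulation $v_i$, of one of its $r_i$ incoming arcs to retain, and distinct support trees correspond to distinct such tuples. The resulting map from $\mathcal{S}_N$ into $\prod_{i=1}^k \{\text{incoming arcs of }v_i\}$ is therefore injective, which yields $|\mathcal{S}_N|\le \prod_{i=1}^k r_i$ and completes the proof.

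I do not anticipate a substantial obstacle in this argument; the only subtlety is that not every tuple of kept arcs yields a valid support tree (the resulting subgraph must preserve the leafset), which is exactly why the bound is an inequality rather than an equality. As an alternative, one can give a direct proof bypassing Proposition \ref{prop:CC=ST-SB}: in any cherry cover $P$, each of the $r_i-1$ parallel outgoing arcs of reticulation $v_i$ in $B(N)$ must be the bottom arc of some reticulated cherry shape, whose middle arc is one of the $r_i$ incoming arcs of $v_i$; thus exactly one incoming arc of $v_i$ is \emph{not} used as a middle arc in $P$, and associating this arc to $P$ at each reticulation defines an injection $\mathcal{P}_N\hookrightarrow \prod_{i=1}^k\{\text{incoming arcs of }v_i\}$, giving the same bound.
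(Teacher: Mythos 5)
Your proposal is correct, and your ``alternative'' direct argument is essentially word-for-word the paper's own proof: the paper observes that in the semi-binary case a cherry cover is determined by its reticulated cherry shapes, which are in turn determined by their middle arcs, and that at each reticulation $u_i$ exactly one of the $r_i$ incoming arcs fails to be a middle arc; this yields the injection into $\prod_{i=1}^k\{\text{incoming arcs of }u_i\}$ and hence the bound (with the same caveat you note, that not every selection extends to a cherry cover). Your primary route, via Proposition \ref{prop:CC=ST-SB} and counting support trees by the tuple of retained incoming arcs, is a mild but legitimate variant: it buys you a slightly more transparent injectivity argument (two support trees differing as subgraphs must differ in some retained arc), at the cost of invoking Proposition \ref{prop:CC=ST-SB}, which the paper does not need here. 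Since that proposition is proved earlier in the paper and holds for all semi-binary networks (both counts being zero in the non-tree-based case), there is no circularity, and both routes are sound; the underlying count --- one distinguished incoming arc per reticulation --- is the same in either case.
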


\begin{pf}
    Observe that, in a cherry cover, once we select the reticulated cherry shapes, the rest of the cherry cover is fixed. Moreover, in the semi-binary case, reticulated cherry shapes are determined by its middle arc. So we need to count the possible selections for this middle arc. For a reticulation vertex $u$ with $\indeg u=d$, there must be $d-1$ reticulation arcs incident to $u$ covered as the middle arc of a reticulated cherry shape, and only $1$ arc without this property. Now, if we denote by $\{u_i\}_{i=1}^k$ and $\{r_i\}_{i=1}^k$, the set of reticulations and its indegrees, respectively, we have $r_i$ possibilities to select the only arc, incident to $u_i$, that would not be covered as a middle arc of a reticulated cherry shape in a cherry cover. 
    Thus, we have $\prod_{i=1}^k r_i$ possible ways to select such arcs, and therefore at most this many cherry covers. 
    Note that for some combination of reticulation arc selections, it is possible for a cherry cover to not exist.
\end{pf}

\begin{prop}\label{prop:TC_CV}
Let $N$ be a semi-binary network with $r$ reticulations. Let $r_1,\ldots,r_k$ denote the indegrees of each reticulation. Then, $N$ is tree-child if, and only if,  $|\mathcal{P}_N| = \prod_{i=1}^k r_i$.
\end{prop}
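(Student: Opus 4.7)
The plan is to combine Lemma~\ref{lem:cota2} with the bijection established by Proposition~\ref{prop:CC=ST-SB}. Lemma~\ref{lem:cota2} bounds $|\mathcal{P}_N|$ by $\prod_{i=1}^k r_i$, which counts all possible selections of one incoming arc per reticulation (the arc that is \emph{not} covered as the middle arc of a reticulated cherry shape). Moreover, from the proof of Proposition~\ref{prop:CC=ST-SB}, each cherry cover of $N$ corresponds bijectively to a support tree obtained by deleting exactly the middle arcs. Hence $|\mathcal{P}_N| = \prod_{i=1}^k r_i$ is equivalent to the statement that \emph{every} choice of one incoming arc per reticulation yields a subgraph that is a valid support tree (i.e.\ no new leaves are created). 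This reformulation turns the problem into a structural question about which selections of reticulation arcs produce support trees.

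For the forward direction, assume $N$ is tree-child and pick an arbitrary selection of kept arcs, forming a subgraph $T$. Suppose for contradiction that some vertex $v$ is a leaf of $T$ but not of $N$. Then $v$ has outdegree at least one in $N$ but outdegree zero in $T$, so every outgoing arc of $v$ must have been removed. Since only reticulation arcs are deleted in the construction, every child of $v$ in $N$ must be a reticulation. This contradicts the tree-child property, which requires $v$ to have at least one child that is a tree node or a leaf. Thus no selection produces a new leaf, every selection yields a support tree, and $|\mathcal{S}_N| = \prod_{i=1}^k r_i$.

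For the reverse direction I would prove the contrapositive. Suppose $N$ is not tree-child, so some interior node $v$ has all its children being reticulations; call them $u_{j_1}, \ldots, u_{j_\ell}$. For each $u_{j_s}$ the indegree satisfies $r_{j_s} \geq 2$, so we may choose as the kept incoming arc of $u_{j_s}$ any arc different from $(v, u_{j_s})$. Combining these with arbitrary choices for the remaining reticulations gives a selection in which every outgoing arc of $v$ has been deleted; hence $v$ becomes a new leaf and the resulting subgraph is not a support tree. Therefore at least one of the $\prod_{i=1}^k r_i$ selections fails, giving $|\mathcal{S}_N| < \prod_{i=1}^k r_i$, and by the bijection $|\mathcal{P}_N| < \prod_{i=1}^k r_i$.

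The only delicate point — and the main obstacle to double-check — is verifying that the argument in each direction truly hinges on reticulation arcs being the \emph{only} arcs removed. This ensures that the sole way an interior node of $N$ can become a leaf of $T$ is by having every child be a reticulation, exactly the configuration forbidden by tree-childness; once this is pinned down, both implications follow cleanly from the counting bound in Lemma~\ref{lem:cota2} and the support-tree/cherry-cover correspondence.
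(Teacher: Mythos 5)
Your proof is correct, but it takes a genuinely different route from the paper's. You reduce everything to counting support trees via Proposition~\ref{prop:CC=ST-SB}: since each support tree corresponds to exactly one selection of a kept incoming arc per reticulation, and such a selection always yields a rooted tree (every non-root vertex retains indegree one in an acyclic graph), the only failure mode is the creation of a new leaf, which happens precisely when some interior node has all of its children among the reticulations and loses all of its outgoing arcs --- exactly the configuration forbidden by tree-childness. Both implications then follow from elementary combinatorics of these selections. The paper instead proves the forward direction by observing that a tree-child network's extended fence decomposition consists only of generalized $M$-fences (of lengths $2$ and $2r_i$) and invoking the counting machinery of Theorem~\ref{thm:SBcount}, and proves the converse by a surgery argument: subdividing an arc below a ``bad'' vertex $u$ and attaching a new leaf to produce a network $N'$ with strictly more than $\prod_{i=1}^k r_i$ cherry covers, contradicting Lemma~\ref{lem:cota2}. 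Your argument is more elementary and self-contained (it needs neither the fence decomposition nor the network modification), while the paper's approach has the virtue of exhibiting the tree-child case as a direct corollary of its general enumeration theorem. The one point worth making explicit in a final write-up is the observation that a selection of kept arcs always produces a tree unless a new leaf is created --- you flag the right delicate point (that only reticulation arcs are removed), and that claim does hold, so the proof goes through.
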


\begin{pf}
    First, assume that $N$ is tree-child. Observe that in this case the extended fence decomposition of $N$ only contains $M$-fences of size $2$ and $k$ generalized $M$-fences of size $2r_i$. By Theorem \ref{thm:SBcount}, the result follows. 
    
    Now, assume that $|\mathcal{P}_N| = \prod_{i=1}^k r_i$ and suppose that $N$ is not a tree-child network. Then, there exists a non-leaf vertex $u$ such that all its children are reticulations. By the semi-binary property, $u$ either has $1$ child (if it is a reticulation) or $2$ children (if it is a tree node). Assume that $u$ is a reticulation, and let $v$ be its only reticulation child and $h$ the only child of $v$. 
    
    Consider the graph $N'$ obtained by: subdividing the arc $uv$ with a vertex $u'$; adding a new vertex $v'$; and adding the arc $u'v'$. It is clear that the result of this operation gives a semi-binary network. Observe that, by the construction of $N'$, any cherry cover of $N$ determines, over the bulged version of $N'$, a unique covering of all the arcs but $u'v'$ and $u'v$. Therefore, for any cherry cover $P\in |\mathcal{P}_N|$, if we add the cherry shape $ \{u'v',u'v\}$, we get a cherry cover for the bulged version of $N'$, so we have $P\cup\{u'v',u'v\}\in \mathcal{P}_{N'}$. Thus, $|\mathcal{P}_{N'}| \geq |\mathcal{P}_{N}| = \prod_{i=1}^k r_i$.
    
    Also, observe that over the bulged version of $N$, we have many parallel arcs between $v$ and $h$, let $a$ be one of them. Let $P$ be a cherry cover of $N$. Then, $P$ must cover the arc $a$ in some reticulated cherry shape, let us denote by $\{a,wv,wl\}$, where $w$ is a parent of $v$ such that it is tree node and $l$ is the child of $w$ different from $v$. Observe that $w\neq u$ since $u$ is a reticulation. Consider the set $P' = (P\setminus \{a,wv,wl\})\cup\{a,u'v,u'v'\}\cup \{wv,wl\}$. It is clear that $P'\setminus \{a,u'v,u'v'\}$ is not a cherry cover of $N$, since it does not cover the arc $a$, and $P'$ does not contain intersecting shapes. Therefore, $P'$ represents a different cherry cover than the ones that we could build from $\mathcal{P}_{N}$. 
    Noting that the indegrees of the reticulations have not changed, this means that $|\mathcal{P}_{N'}| \geq \prod_{i=1}^k r_i+1$, contradicting Lemma \ref{lem:cota2}.

    If $u$ is a tree node and $v$ and $w$ represent its children, let us construct $N''$ from $N$ by subdividing the arc $uv$ with a node $u'$, adding a new vertex $v'$, and adding an arc $u'v'$. A similar argument as in the previous case leads us to a contradiction.
\end{pf}

\section{Conclusion}\label{sec:Conc}

To summarize our contributions, we have shown that the number of cherry covers coincide with the number of support trees for a semi-binary network (\Cref{prop:CC=ST-SB}), and we showed that for binary tree-based networks, there is a natural bridge between its fence decomposition and its cherry covers (\Cref{thm:NumberOfCherryCovers}).
We have extended the notion of a fence decomposition to the non-binary setting, by introducing new types of fences in \Cref{sec:extension}.
Using this new characterization, every non-binary network can be decomposed into a unique extended fence decomposition (\Cref{prop:extended_decomp}).
With this well-defined decomposition, we proved the generalized version of \Cref{thm:NumberOfCherryCovers}, by showing that one can count the number of cherry covers (and therefore the number of support trees) of a semi-binary tree-based network (\Cref{thm:SBcount}).
We also gave a new characterization of semi-binary tree-based networks based on forbidden structures within the extended fence decomposition (\Cref{thm:ForbiddenStructure}).
Finally, we gave a new characterization of semi-binary tree-child networks based on the number of cherry covers (\Cref{prop:TC_CV}).

We now discuss potential future directions regarding cherry covers and extended fence decompositions, starting with open problems on generalizations to the non-binary setting.
For example, can we show that for non-binary tree-based networks, the number of cherry covers coincide with the number of support trees?
At first glance, this does not seem true. 
To give some justification, recall that within a cherry cover for a non-binary network, each arc may be covered more than once (for the formal definition, see Definition 2.10 of~\cite{van2021unifying}).
Consider a non-binary tree~$T$ on three leaves~$\{x,y,z\}$ with a common parent $t$. 
Including the root~$\rho$, this is a tree on 5 vertices~$\{\rho, t, x, y, z\}$ and 4 arcs $\{\rho t, tx, ty, tz\}$.
By definition, $\{\{tx,ty\},\{ty,tz\}\}$ and $\{\{tx,ty\},\{tx,tz\}\}$ are both cherry covers of~$T$. 
Since~$T$ is a tree, it has only one support tree, namely the tree itself.
Thus, naively, the equality $|\mathcal{S}_T|= |\mathcal{P}_T|$ does not hold. 
However, we note that there is exactly one \emph{maximal cherry cover} $\{\{tx,ty\},\{tx,tz\},\{ty,tz\}\}$ of~$T$, which has maximal size over all possible cherry covers. 
Therefore, we conjecture the following.

\begin{conj}\label{conj:S(N)=MaxP(N)}
    Let~$N$ be a non-binary network. Let~$\mathcal{S}_N$ and~$\mathcal{P}_N$ denote the set of support trees and maximal cherry covers for~$N$, respectively. Then~$|\mathcal{S}_T|= |\mathcal{P}_T|$.
\end{conj}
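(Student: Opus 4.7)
My plan is to construct an explicit bijection between $\mathcal{S}_N$ and $\mathcal{P}_N$, generalising the correspondence underlying the proof of \Cref{prop:CC=ST-SB}. Given a support tree $S \in \mathcal{S}_N$, let $D_S = A(N) \setminus A(S)$ be the reticulation arcs deleted in forming $S$; by construction $D_S$ contains exactly $\indeg(r)-1$ incoming arcs of each reticulation $r$. I define $\phi(S)$ to be the set consisting of every cherry shape of $N$ together with every reticulated cherry shape of $N$ whose middle arc lies in $D_S$.

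The first step is to verify that $\phi(S)$ is a maximal cherry cover. For coverage: every outgoing arc of a tree vertex appears in at least one cherry shape at its tail; every arc in $D_S$ appears as a middle arc; and every outgoing arc $(r,h)$ of a reticulation $r$ is covered by the $\indeg(r)-1 \geq 1$ reticulated cherry shapes whose middle arcs are the deleted incoming arcs of $r$. Maximality should follow because the only potentially addable shapes are reticulated cherry shapes with middle arcs outside $D_S$, and those are precluded by the structural constraint that at most $\indeg(r)-1$ incoming arcs of each reticulation $r$ may serve as middle arcs of a cover (implicit in Definition 2.10 of \cite{van2021unifying}).

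For the inverse direction, given a maximal cherry cover $P \in \mathcal{P}_N$, I let $M(P)$ denote the set of middle arcs occurring in reticulated cherry shapes of $P$, and set $\psi(P) = (V(N), A(N) \setminus M(P))$. The key structural claim is that for every reticulation $r$, exactly $\indeg(r)-1$ of its incoming arcs lie in $M(P)$: the upper bound follows from the definition of cherry cover, and the lower bound follows from maximality, since any ``missing'' incoming arc $a = (u,r)$ could otherwise be promoted to the middle arc of a new reticulated cherry shape $\{(r,h), a, (u, y)\}$ (with $y$ any sibling of $r$ under $u$), contradicting that no further shape can be added. This forces $\psi(P)$ to preserve the leaf set of $N$ and hence to be a support tree.

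The main obstacle is precisely this structural lemma on middle-arc counts in maximal covers, which requires carefully unpacking the definition of non-binary cherry covers from \cite{van2021unifying} and ruling out pathological situations where an incoming arc cannot be completed into a valid reticulated cherry shape (e.g.\ when the tree vertex $u = \tail(a)$ has no suitable sibling arc available, possibly requiring a small case analysis on the local structure at $u$). Once the lemma is in hand, the verifications $\psi(\phi(S)) = S$ and $\phi(\psi(P)) = P$ are routine: $M(\phi(S)) = D_S$ by construction, and maximality of $P$ forces it to contain every cherry shape of $N$ together with every reticulated cherry shape whose middle arc lies in $M(P)$, so $\phi$ and $\psi$ are mutual inverses and $|\mathcal{S}_N| = |\mathcal{P}_N|$ follows.
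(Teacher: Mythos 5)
This statement is left as a \emph{conjecture} in the paper, so there is no proof of it to compare yours against; I can only assess your proposal on its own terms, and as written it is a plan rather than a proof. The step you yourself flag as ``the main obstacle'' --- that every reticulation $r$ has exactly $\indeg(r)-1$ incoming arcs serving as middle arcs in a maximal cover --- is essentially the entire mathematical content of the statement, and your justifications for its two halves pull in opposite directions. The upper bound ($\le \indeg(r)-1$ middle arcs) requires Definition 2.10 of \cite{van2021unifying} to cap how often the incoming arcs of $r$ may appear in shapes; the lower bound requires maximality to always permit promoting a non-middle incoming arc $a$ to a middle arc, i.e., it requires that $a$ \emph{may} be covered by an additional shape. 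If reticulation arcs can be covered arbitrarily often, then all $\indeg(r)$ incoming arcs can simultaneously be middle arcs, the unique inclusion-maximal cover is the set of all shapes of $N$, and $|\mathcal{P}_N|=1$ even when $|\mathcal{S}_N|>1$. If instead each reticulation arc must lie in exactly one shape, then the promotion in your lower-bound argument is illegal whenever $a$ is already covered as a non-middle arc (say, in a cherry shape at $\tail(a)$), and maximality forces nothing. Until you state the definition precisely and prove the counting lemma under it, the bijection has no foundation.

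Two further concrete problems. First, the paper glosses ``maximal'' as having ``maximal size over all possible cherry covers,'' i.e., maximum cardinality, and your images $\phi(S)$ do not all have the same cardinality: the number of reticulated cherry shapes with middle arc $a=(p_y,p_x)$ is $\outdeg(p_y)-1$, so $|\phi(S)|$ depends on the outdegrees of the tails of the deleted arcs. A reticulation with one parent of outdegree $2$ and another of outdegree $3$ already yields two support trees whose images under $\phi$ differ in size, so at most one of them can be a maximum-cardinality cover and $\phi$ cannot be a bijection onto $\mathcal{P}_N$ under that reading of maximality. Second, for $\psi(P)$ to be a support tree you must also show that no \emph{tree} vertex loses all of its outgoing arcs (which would create a new leaf); in the semi-binary case this follows from the arcs being partitioned into shapes, but once arcs may lie in several shapes a tree vertex could in principle have every one of its outgoing arcs serve as a middle arc of some shape in $P$, and your argument never rules this out.
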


In another direction, one can attempt for a forbidden structure characterization for non-binary tree-based networks.
Here, the analysis of the extended fence decompositions quickly becomes more complex, as upper $W$-crowns and upper $N$-crowns are not immediately forbidden due to the relaxation of degree constraints.


\newpage
\noindent \sloppy \textbf{Acknowledgements.} JCP and PVL was partially supported by the Spanish Ministry of Economy and Competitiveness and European Regional Development Fund project PID2021-126114NB-C44 funded by MCIN/AEI/10.13039/501100011033 and by “ERDF A way of making Europe.”

\bibliographystyle{alpha}
\bibliography{mybibfile}

\end{document}